\title{Free constructions and coproducts of d-frames}
\author{Tom\'a\v s Jakl \and Achim Jung}
\date{April 7, 2017}
\tikzset{commutative diagrams/.cd}
\tikzset{commutative diagrams/row sep/normal=1.3cm}
\tikzset{commutative diagrams/column sep/normal=1.3cm}
\tikzstyle{nattransf} = [-implies,double, double equal sign distance, shorten >=10pt, shorten <=10pt]
\tikzstyle{nattdiag} = [-implies,double, double equal sign distance]
\tikzstyle{rel} = [-,dashed,swap]
\theoremstyle{plain}
\newtheorem{theorem}{Theorem}
\newtheorem{proposition}[theorem]{Proposition}
\newtheorem{lemma}[theorem]{Lemma}
\newtheorem{observation}[theorem]{Observation}
\newtheorem{corollary}[theorem]{Corollary}
\theoremstyle{definition}
\newtheorem*{remark}{Remark}
\newtheorem*{notation}{Notation}
\newtheorem*{definition}{Definition}
\newcommand\p[1]{\ensuremath{\mathcal{ #1 }}}
\newcommand\qtq[1]{\quad\text{#1}\quad}
\newcommand{\ttimes}{\mathord{\times}}
\newcommand\upset  {\ensuremath{\mathord{\uparrow}\mkern1mu}}
\newcommand\downset{\ensuremath{\mathord{\downarrow}\mkern1mu}}
\newcommand\operatorupX[1]{\,\ThisStyle{\ensurestackMath{%
  #1\stackengine{-0pt}{\,}{\SavedStyle\!^{\mathord{\uparrow}}}{O}{l}{F}{T}{S}}}\!}
\newcommand\operatorup[1]{\!\mathop{\operatorupX{#1}}\@ifnextchar\{{\,}{\@ifnextchar[{\,}{\@ifnextchar({\,}{}}}}
\newcommand{\dirsqcup}{\operatorup{\bigsqcup}}
\newcommand{\dircup}  {\operatorup{\bigcup}}
\newcommand{\dirvee}  {\operatorup{\bigvee}}
\newcommand\dirsubseteq{\mathbin{\subseteq\!\dirup}}
\newcommand\dirup{\!{}^{\upset}}                    
\newcommand\finsubseteq{\mathbin{\subseteq\!^\text{fin}}}
\newcommand{\tot}{\ensuremath{\mathsf{tot}}}
\newcommand{\con}{\ensuremath{\mathsf{con}}}
\newcommand{\dtt}{\ensuremath{\text{\emph{t}}\mkern-3mu\text{\emph{t}}}}
\newcommand{\dff}{\ensuremath{\text{\emph{ff}}}}
\newcommand{\DEFEQ}                  {\overset{\text{\tiny def}}{\equiv}}
\newcommand{\sem}[1][-]{\ensuremath{\llbracket #1 \rrbracket}}
\newcommand\Di{\ensuremath{\p D^{\text{ind}}}}
\newcommand{\conO}{\ensuremath{\con_{1}}}
\newcommand{\totO}{\ensuremath{\tot_{1}}}
\newcommand{\conW}{\ensuremath{\con_{\wedge}}}
\newcommand{\conV}{\ensuremath{\con_{\vee}}}
\newcommand{\conWV}{\ensuremath{\con_{\wedge, \vee}}}
\newcommand{\conVW}{\conWV}
\newcommand{\totW}{\ensuremath{\tot_{\wedge}}}
\newcommand{\totV}{\ensuremath{\tot_{\vee}}}
\newcommand{\totWV}{\ensuremath{\tot_{\wedge, \vee}}}
\newcommand{\conWbV}{\ensuremath{\con_{\wedge, \bigvee}}}
\newcommand{\conVbW}{\ensuremath{\con_{\vee,   \bigwedge}}}
\newcommand{\conWVbV}{\ensuremath{\con_{\wedge, \vee, \bigvee}}}
\newcommand{\conWVbW}{\ensuremath{\con_{\wedge, \vee, \bigwedge}}}
\newcommand{\higlightChange}[1]{%
    {\color{gray}\dbox{\normalcolor$#1$}}%
}
\newcommand\drawline[4]{
  \pgfline{\pgfpoint{#1}{#2}}{\pgfpoint{#3}{#4}}
}
\newcommand\Fr[1]{\ensuremath{\mathbb Fr\!\left< #1 \right>}}
\newcommand\CIdl[1][\pm]{\ensuremath{\p C_{#1}\text{-}\mathtt{Idl}\!\left(B_{#1}\right)}}
\newcommand\CIdlG[2][\pm]{\ensuremath{\p C_{#1}\text{-}\mathtt{Idl}\!\left<#2\right>}}
\newcommand\CON[1][\conO]{\ensuremath{\mathtt{CON}\langle #1 \rangle}}
\newcommand\TOT[1][\totO]{\ensuremath{\mathtt{TOT}\langle #1 \rangle}}
\newcommand\sCON{\ensuremath{\p D(\downset \conWV)}}
\newcommand\dL{\ensuremath{(L_+, L_-;\, \con, \tot)}}
\newcommand\dPres{\ensuremath{(B_+, B_-;\, \p C_+, \p C_-;\, \conO, \totO)}}
\newcommand\dB{\ensuremath{(B_+, B_-;\, \conO, \totO)}}
\newcommand\dBC[1][\pm]{\ensuremath{(B_{#1}, \p C_{#1})}}
\newcommand\dLGen{\ensuremath{(L_+, L_-;\, \CON, \TOT)}}
\newcommand\ones{\ensuremath{\overline 1}}
\newcommand\npm{\ensuremath{\mathbf{n_\pm}}}
\newcommand\np{\ensuremath{\mathbf{n_+}}}
\newcommand\nm{\ensuremath{\mathbf{n_-}}}
\newcommand\OLxOL{\ensuremath{\bigoplus_i L_+^i\ttimes \bigoplus_i L_-^i}}
\NewDocumentEnvironment{axioms}{O{Ax}}{\enumerate[label=(#1-\expandafter\arabic*), labelindent=5em, leftmargin=*, labelsep=1.5em]}{\endenumerate}
\newenvironment{diagram}{\begin{center}\begin{tikzcd}}{\end{tikzcd}\end{center}}
\begin{document}
\maketitle

\begin{abstract}
    A general theory of presentations for d-frames does not yet exist. We review the difficulties and give sufficient conditions for when they can be overcome. As an application we prove that the category of d-frames is closed under coproducts.
\end{abstract}

\section{Introduction}

In his celebrated \emph{Domain Theory in Logical Form}~\cite{abramsky87a}, Abramsky describes a flexible framework for connecting the denotational semantics of a programming language with an algebraic presentation of a program logics. The denotational spaces are spectral spaces and the algebras are distributive lattices; they are connected via Stone duality~\cite{jung13a}.

The attempt to expand the scope of Abramsky's work to cover probabilistic and real-number computation led to the study of stably compact spaces and their Stone duality. Later, it was shown that stably compact spaces have a very natural bitopological description; they are exactly the compact regular bitopological spaces (or \emph{bispaces} for short)~\cite{jung06,lawson2011stably}. Moreover, the Stone-type duality between bispaces and \emph{d-frames} given in \cite{jung06} has a finitistic description in the compact regular case, and so we can try to extend Abramsky's work to this setting.

Free constructions of distributive lattices are an essential tool of Domain Theory in Logical Form and therefore a general theory of free constructions of d-frames is highly desirable. Unfortunately, no such theory exists as of yet. The difficulty lies in the mixed algebraic-relational nature of d-frames and particularly in the axiom (\con-\tot). In the absence of a general theory one can look at special instances of the problem where the difficulties with (\con-\tot) can be controlled. This is our approach in this paper.

The carrier of a d-frame is two-sorted, consisting of two standard frames $L_+$ and~$L_-$. We use the usual generator and relations machinery to present them separately. The remaining parts of the structure, the consistency and totality relations $\con, \tot \subseteq L_+\ttimes L_-$ can be specified by generating relations \conO{} and \totO{}, but it is not clear how to make sure that (\con-\tot), the only axiom that bonds both relations, will hold in the generated structure. Rather than solve this general problem, we provide sufficient conditions which can be checked in an early stage of the generating process (Section~\ref{s:d-frames}).

As an application, we prove that the category of d-frames is closed under coproducts (Section~\ref{s:coproducts}). In forthcoming work, \cite{jakljung2017}, we show that the same techniques allow us to define the d-frame that corresponds to the \emph{Vietoris power space} over a bispace. Together, this lays the foundation for the extension of Abramsky's program as explained above as well as four-valued coalgebraic logic (inspired by~\cite{klin2007coalgebraic,kupkekurzvenema2004stone,jacobs2001many}). We believe that our results are interesting from a model-theoretic perspective as well, as they provide an example of a free construction for a two-sorted algebraic-relational structure. Although not completely general, our techniques hold promise for extending many other frame-theoretic constructions to d-frames (for examples see \cite{johnstone82,picadopultr2011frames,vickers89}).

\section{Preliminaries}

Frames are algebraic structures which capture the order-theoretic properties of the lattice of open sets of a topological space. We say that a complete lattice $(L;\, \bigvee, \wedge, 0, 1)$ is a \emph{frame} if it satisfies the following infinitary distributivity law
\begin{axioms}
    \item[(Frm)] $b \wedge (\bigvee_i\, a_i) = \bigvee_i\, (b \wedge a_i)$.
\end{axioms}
The counterparts to continuous maps are the \emph{frame homomorphisms} which are maps distributing over \emph{all} joins and \emph{all finite} meets.

A topological space $(X;\, \tau)$ gives rise to a frame: the lattice of its open sets ordered by set inclusion $\Omega(X) = (\tau;\, \bigcup, \cap;\, \emptyset, X)$ is a frame. Also, any continuous map $f\colon X\to Y$ gives rise to a frame homomorphism $\Omega(f)\colon \Omega(Y) \to \Omega(X)$ as $U\in \tau^Y \mapsto f^{-1}[U] \in \tau^X$.

\medskip

Following the example of frames we have d-frames as the algebraic counterparts to bitopological spaces (or \emph{bispaces}, for short)\footnote{Bispaces are the structures $(X;\, \tau_+, \tau_-)$ where $(X;\, \tau_+)$ and $(X; \tau_-)$ are topological spaces. A map between two bispaces $f\colon X\to Y$ is \emph{bicontinuous} if both $f_+\colon (X;\, \tau_+^X) \to (Y;\, \tau_+^Y)$ and $f_-\colon (X;\, \tau_-^X) \to (Y;\, \tau_-^Y)$ (which are acting on the underlying set $X$ the same way as $f$ does) are continuous.}. Because bispaces have two topologies, we expect to have two frames, $L_+$ and $L_-$, as part of the structure of d-frames.

This alone has some consequences. We can recognise two orders in the product $L_+\ttimes L_-$; the first is the \emph{information order} $\sqsubseteq$ where, for $\alpha = (\alpha_+, \alpha_-)$ and $\beta = (\beta_+, \beta_-) \in L_+\ttimes L_-$, $\alpha \sqsubseteq \beta$ iff $\alpha_+ \leq \beta_+$ and $\alpha_- \leq \beta_-$. The second is the \emph{logical order} $\leq$ where $\alpha \leq \beta$ iff $\alpha_+ \leq \beta_+$ and $\alpha_- \geq \beta_-$. Both $(L_+\ttimes L_-; \sqsubseteq)$ and $(L_+\ttimes L_-; \leq)$ are bounded distributive lattices with meets and joins computed as follows
\begin{align*}
    \alpha\vee   \beta = (\alpha_+\vee   \beta_+, \alpha_-\wedge \beta_-), &\qquad
    \alpha\sqcup \beta = (\alpha_+\vee   \beta_+, \alpha_-\vee   \beta_-), \\
    \alpha\wedge \beta = (\alpha_+\wedge \beta_+, \alpha_-\vee   \beta_-), &\qquad
    \alpha\sqcap \beta = (\alpha_+\wedge \beta_+, \alpha_-\wedge \beta_-).
\end{align*}
The smallest and largest elements in the information order are $\bot = (0,0)$ and $\top = (1,1)$, and in the logical order $\dff = (0,1)$ and $\dtt = (1,0)$, respectively.

With just two frames we would not be able to express many bitopological properties. One can require $L_+$ and $L_-$ to be subframes of a bigger frame representing the join of the two topologies as proposed by Banashewski~\cite{bbh1983biframes}. Or, following the second author and Moshier~\cite{jung06}, we can require two binary relations \con{} and \tot{} between the two frame components where $(a,b)\in \con$ corresponds to $a$ being disjoint from $b$, and $(a,b)\in \tot$ if $a$ and $b$ cover the whole space. Our work takes the second approach.

Formally, then, a \emph{d-frame} is a structure $\p L = \dL$ such that $L_+$ and $L_-$ are frames and the binary \emph{consistency} $\con \subseteq L_+\ttimes L_-$ and \emph{totality} $\tot \subseteq L_+\ttimes L_-$ relations satisfy the following axioms, for all $\alpha, \beta\in L_+\ttimes L_-$:
\begin{axioms}
    \item[(\con--$\downset$)] $\alpha \in \con$ and  $\beta \sqsubseteq \alpha \implies \beta \in \con$,

    \item[(\tot--$\upset$)] $\alpha \in \tot$ and $\beta \sqsupseteq \alpha \implies \beta \in \tot$,

    \item[(\con,\tot--$\dtt,\dff$)] $\dtt \in \con$ and $\dtt \in \tot$, $\dff \in \con$ and $\dff \in \tot$,

    \item[(\con--$\wedge,\vee$)] $\alpha, \beta \in \con \implies \alpha \vee \beta \in \con$ and $\alpha \wedge \beta \in \con$,

    \item[(\tot--$\wedge,\vee$)] $\alpha, \beta \in \tot \implies \alpha \vee \beta \in \tot$ and $\alpha \wedge \beta \in \tot$,

    \item[(\con--$\bigsqcup\!{}^{\upset}$)] $A\subseteq \con$ and $A$ is $\sqsubseteq$-directed $\implies \dirsqcup A \in \con$,

    \item[(\con--\tot)] $\alpha \in \con, \beta \in \tot$ and $(\alpha_+ = \beta_+$ or $\alpha_- = \beta_-) \notag \implies \alpha \sqsubseteq \beta$.
\end{axioms}
Algebraically speaking, the 3rd--6th axioms say that $(\con; \wedge,\vee, \dtt, \dff)$ and $(\tot; \wedge,\vee, \dtt, \dff)$ are (bounded) distributive lattices and that $(\con; \sqsubseteq)$ is a DCPO. Directed suprema are computed pointwise, i.e.\ for a $\sqsubseteq$-directed $A\subseteq \con$, $\dirsqcup A = (\dirvee\ \{ \alpha_+ : \alpha \in A\},\ \dirvee\ \{ \alpha_- : \alpha \in A\})$.

A pair of frame homomorphisms $h = (h_+\colon L_+\to M_+,h_-\colon L_-\to M_-)$ is a \emph{d-frame homomorphism} $h\colon \p L\to \p M$ if, for all $\alpha \in \con^\p L$, $h(\alpha) = (h_+(\alpha_+), h_-(\alpha_-)) \in \con^\p M$ and, for all $\alpha \in \tot^\p L$, $h(\alpha) \in \tot^\p M$.

Every bispace $X = (X;\, \tau_+, \tau_-)$ gives rise to a d-frame
$\Omega^d(X) = (\tau_+, \tau_-; \con^X, \tot^X)$
where $(U_+, U_-) \in \con^X$ iff $U_+\cap U_- = \emptyset$ and $(U_+, U_-) \in \tot^X$ iff $U_+ \cup U_- = X$. Similarly, every bicontinuous map $f\colon X\to Y$ gives rise to a d-frame homomorphism $\Omega^d(f) = (\Omega(f_+), \Omega(f_-))\colon \Omega^d(Y) \to \Omega^d(X)$.

\medskip

The (\con-\tot) axiom, while essential in the theory of d-frames, is harder to guarantee in constructions. We therefore introduce the auxiliary notion of a \emph{pre-d-frame} where all but the (\con-\tot) axiom of d-frames are required to hold.

\begin{remark}
    Often, when we quantify over elements or sets that appear in both plus and minus forms we will use the symbol ``$\pm$'' to mean both of them. For example, ``$A_\pm$ has property X'' means ``$A_+$ and $A_-$ have property X'', or ``there exist elements $x_\pm \in L_\pm$'' means ``there exist elements $x_+\in L_+$ and $x_-\in L_-$''.

    Also, because of the symmetrical nature of d-frames, many proofs consist of two identical arguments, one for the plus and and one for the minus side. Instead, we give only one of the variants without even mentioning the other.
\end{remark}

\section{Presentations}\label{s:presentations}

\subsection{Presentation of frames}\label{s:frame-pres}

Frames, like other algebraic structures, may be presented in terms of generators and relations $\left< G | R \right>$. The resulting frame \Fr{ G | R } is obtained as the quotient $\Fr G\!/_{\sim_R}$. Here, \Fr G represents the term algebra generated by the set of generators $G$ which, because of the frame distributivity law, consists of terms of the form: $\bigvee_i ( \wedge_{j=1}^{n_i}\ g_{i,j})$. The congruence $\sim_R$ is generated from a relation $R\subseteq \Fr G\ttimes\Fr G$ where each element of $R$ is thought of as an equation:
\begin{align}
    \bigvee_i (\wedge_{j=1}^{n_i}\ g_{i,j}) = \bigvee_{i'} ( \wedge_{j'=1}^{n'_{i'}}\ g'_{i',j'}). \label{e:R-equ}
\end{align}
However, the structure of $\Fr G\!/_{\sim_R}$ is not transparent at all. Its elements are equivalence classes of infinitary terms quotiented by $R$, which itself consists of ``infinitary'' equations.
This is addressed in the \p C-ideal presentation of frames. We assume that our generators form a meet-semilattice\footnote{We always assume that meet-semilattices are closed under \emph{all} finite meets, i.e. they also contain the top element.} $B$ representing the terms $\wedge_{j=1}^{n}\ g_{j}$. Moreover, we can restrict to equations in which the right-hand side consists of a single finite meet of generators, i.e.\ an element of $B$. In the terminology of \p C-ideals, we have a set of cover relations \p C where a \emph{cover relation} is any pair $U\dashv a$ such that $a\in B$ and $U\subseteq \downset a$ (to represent the equation $\bigvee U= a$). If, moreover, \p C satisfies the stability condition
\[ U \dashv a \in \p C,\ b\leq a \implies \{ u\wedge b : u\in U \} \dashv b \in \p C \tag{\p C-st.}\]
then we call \dBC[] a \emph{frame presentation}.



\medskip

The frame presented by \dBC[] has an explicit description as the frame of all $\p C$-ideals, denoted by \CIdl[], where $I\subseteq B$ is a \emph{\p C-ideal} if it is a downset and
\[ U \dashv a \in \p C,\ U \subseteq I \implies a \in I \]

Computing with \p C-ideals is straightforward. The join of a set $\{I_i\}_i$ of \p C-ideal is computed as $\CIdlG[]{\bigcup_i I_i}$ where, for an $M\subseteq B$, $\CIdlG[]{M}$ is the smallest \p C-ideal containing $M$. The meets of \p C-ideals are just intersections: $\bigwedge_i I_i = \bigcap_i I_i$,~\cite[Proposition II.2.11]{johnstone82}.

There is a map translating syntactic terms to their semantic interpretation as \p C-ideals with the following universal property:

\begin{lemma}[Universality]\label{l:frm-univ}
    Let \dBC[] be a presentation of a frame. Then the map $\sem\colon B \to \CIdl[]$
    defined as $b \mapsto \CIdlG[]{\{b\}}$ is a meet-semilattice homomorphism that transforms covers into joins, i.e. $\bigvee \{ \sem[u] : u\in U\} = \sem[a]$ for every $U\dashv a\in \p C$.

    Moreover, $\sem$ is universal among all such maps. That is, if $f\colon B\to L$ is a meet-semilattice homomorphisms that transform covers in \p C into joins, where $L$ is a frame, then there exists a unique frame homomorphism $\overline f\colon \CIdl[]\to L$ such that $f = \overline f \circ \sem$.
\end{lemma}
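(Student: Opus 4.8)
The plan is to treat the two assertions separately, relying throughout on the descriptions of meets and joins in \CIdl[] recalled above: the meet of a family of \p C-ideals is their intersection, and their join is the smallest \p C-ideal containing the union of the family.

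For the first assertion I would check the three requirements in turn. The top is preserved because $\downset 1 = B$ is itself a \p C-ideal, whence $\sem[1] = B$ is the top of \CIdl[]. That a cover $U\dashv a\in\p C$ is sent to a join, $\bigvee\{\sem[u] : u\in U\} = \sem[a]$, is also routine: each $u\in U$ lies in $\downset a$, so $\bigcup_{u\in U}\sem[u]\subseteq\sem[a]$, while conversely $\CIdlG[]{\bigcup_{u\in U}\sem[u]}$ contains $U$ and, being a \p C-ideal, is forced by the cover $U\dashv a$ to contain $a$ and hence all of $\sem[a]$. The one substantial point is preservation of binary meets, $\sem[a\wedge b] = \sem[a]\cap\sem[b]$, whose inclusion $\subseteq$ is immediate from monotonicity (both $\sem[a]$ and $\sem[b]$ are downsets containing $a\wedge b$). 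I expect the reverse inclusion to be \emph{the main obstacle}, and it is precisely here that the stability condition (\p C-st.)\ must be used.

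I would obtain it from the following lemma: for all $a,d\in B$ and every $c\in\sem[a]$ one has $c\wedge d\in\sem[a\wedge d]$. To prove the lemma I would fix $d$ and show that $D := \{\, c\in B : c\wedge d\in\sem[a\wedge d]\,\}$ is a \p C-ideal containing $a$. It plainly contains $a$ and is a downset; closure under covers is where (\p C-st.)\ enters: if $U\dashv c\in\p C$ with $U\subseteq D$, then (\p C-st.)\ applied to $c\wedge d\leq c$ gives $\{\, u\wedge d : u\in U\,\}\dashv(c\wedge d)\in\p C$, and as each $u\wedge d\in\sem[a\wedge d]$ the \p C-ideal $\sem[a\wedge d]$ must contain $c\wedge d$, i.e.\ $c\in D$. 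Hence $\sem[a]\subseteq D$. Applying the lemma twice then shows $x\wedge y\in\sem[a\wedge b]$ whenever $x\in\sem[a]$ and $y\in\sem[b]$: the first application gives $x\wedge y\in\sem[a\wedge y]$, the second (with the roles of $a$ and $b$ swapped) gives $a\wedge y\in\sem[a\wedge b]$, and since $\sem[a\wedge b]$ is a \p C-ideal containing $a\wedge y$ it absorbs $\sem[a\wedge y]$. Taking $x = y = c$ for an arbitrary $c\in\sem[a]\cap\sem[b]$ yields $c\in\sem[a\wedge b]$, the missing inclusion.

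For the second assertion I would set $\overline f(I) := \bigvee\{\, f(b) : b\in I\,\}$. The identity $\overline f\circ\sem = f$ and preservation of arbitrary joins both flow from one observation: for any $t\in L$ the set $\{\, c\in B : f(c)\leq t\,\}$ is a \p C-ideal---a downset by monotonicity of $f$, and closed under covers because $f$ carries covers to joins---so it contains every generated ideal whose generators it already contains. Taking $t = f(b)$ gives $\overline f(\sem[b]) = f(b)$, and taking $t = \bigvee\{\, f(b) : b\in\bigcup_i I_i\,\}$ gives $\overline f(\bigvee_i I_i) = \bigvee_i\overline f(I_i)$. Preservation of finite meets reduces, via the frame distributivity law in $L$ together with the fact that $f$ preserves meets, to the remark that $a\in I$ and $b\in J$ force $a\wedge b\in I\cap J$; the nullary case is $\overline f(B) = f(1) = 1$. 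Finally, uniqueness is automatic once one notes that every \p C-ideal satisfies $I = \bigvee\{\, \sem[b] : b\in I\,\}$, so any frame homomorphism $g$ with $g\circ\sem = f$ is forced to agree with $\overline f$ on all of \CIdl[].
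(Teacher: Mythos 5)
Your proof is correct. The paper itself gives no proof of this lemma --- it is treated as standard background on $\p C$-ideal presentations, with the reference \cite[Proposition II.2.11]{johnstone82} standing in for the argument --- and what you have written is a complete and accurate rendering of exactly that standard coverage-theorem proof: the stability condition (\p C-st.) enters precisely and only where it must (binary meet preservation, via your auxiliary lemma that $c\in\sem[a]$ implies $c\wedge d\in\sem[a\wedge d]$), and the universal part via the $\p C$-ideals $\{c\in B : f(c)\leq t\}$ is the usual one. The only step you gloss over is that stability literally yields the cover $\{u\wedge(c\wedge d) : u\in U\}\dashv(c\wedge d)$, which equals $\{u\wedge d : u\in U\}\dashv(c\wedge d)$ because $U\subseteq\downset c$; this is immediate and does not affect correctness.
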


\begin{remark}
    There are numerous ways of presenting frames, \mbox{e.g.} \cite{vickers89}, \cite{johnstone82}, \cite{ballpultr2014extending} or \cite{picadopultr2011frames}. We picked this one because it suits us better later on for the coproduct of d-frames. For the actual definition of presentation of d-frames it should not really matter as long as we have a universality property similar to the one in Lemma~\ref{l:frm-univ}.
\end{remark}

\subsection{Presentations of pre-d-frames}

In this section we show that we can extend the classical theory to also present a (pre-)d-frame \dL{}. Let us assume that $L_\pm = \CIdl$, for some frame presentations \dBC{}, as in the previous section. We also have the translations $\sem_\pm\colon B_\pm\to L_\pm$ from syntax to semantics according to Lemma~\ref{l:frm-univ}.

Any consistency relation~$\con$ on $L_+\ttimes L_-$ can be specified via the generators: Let $\alpha \in \con$. Since the sets $\sem[B_\pm]_\pm = \{ \sem[b]_\pm : b\in B_\pm \}$ generate the frames $L_\pm$,
\[ \alpha = (\bigvee_{i\in I_+} b^i_+,\  \bigvee_{i\in I_-} b^{i}_-) \quad\text{for some } \{b^i_+\}_{i} \subseteq \sem[B_+]_+ \text{ and } \{b^{i}_-\}_i \subseteq \sem[B_-]_- \]
and, because \con{} is downwards closed in the information order, \con{} must contain all the pairs $(b^{i}_+,\ b^{i'}_-)$, for $(i, i')\in I_+\ttimes I_-$. Moreover, the converse is also true:
\begin{lemma}
    $(\bigvee_{i\in I_+} b^i_+,\  \bigvee_{i\in I_-} b^{i}_-) \in \con\qtq{iff}(b^{i}_+,\ b^{i'}_-)\in \con$, for all $(i, i')\in I_+\ttimes I_-$
\end{lemma}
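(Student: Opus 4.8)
The plan is to prove the two directions separately, since they call for quite different tools. The left-to-right implication is immediate and is essentially already recorded in the paragraph preceding the statement: for every pair $(i,i') \in I_+\ttimes I_-$ we have $b^i_+ \leq \bigvee_{i\in I_+} b^i_+$ and $b^{i'}_- \leq \bigvee_{i\in I_-} b^i_-$, so $(b^i_+, b^{i'}_-) \sqsubseteq (\bigvee_{i} b^i_+, \bigvee_{i} b^i_-)$ in the information order, and (\con--$\downset$) deposits each generator pair into $\con$. It is the converse that carries the content.

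For the converse I would first settle the case of \emph{finite} index sets $I_\pm$, using only the binary closure of $\con$ under the logical lattice operations, i.e.\ (\con--$\wedge,\vee$). Fix a minus generator $b^{i'}_-$ and form the logical join $\bigvee_{i} (b^i_+, b^{i'}_-)$ over the finite set $I_+$. Since the logical join acts as $\vee$ on the plus coordinate and as $\wedge$ on the minus coordinate, and the minus coordinate is held constant, this join equals $(\bigvee_{i} b^i_+,\, b^{i'}_-)$, which therefore lies in $\con$. Dually, now holding the plus coordinate fixed at $c_+ = \bigvee_{i} b^i_+$, form the logical meet $\bigwedge_{i'} (c_+, b^{i'}_-)$ over the finite set $I_-$; the meet collapses the constant plus coordinate and joins the minus coordinates, yielding $(c_+,\, \bigvee_{i'} b^{i'}_-) \in \con$. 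Two short inductions, one per side, thus dispose of the finite case.

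The genuine work is lifting this to arbitrary $I_\pm$, where only \emph{directed} suprema are guaranteed to stay in $\con$. Here I would introduce the family of finite subjoins $\alpha_{F_+,F_-} = (\bigvee_{i\in F_+} b^i_+,\ \bigvee_{i\in F_-} b^i_-)$ indexed by pairs of finite subsets $F_\pm \subseteq I_\pm$. Each $\alpha_{F_+,F_-}$ lies in $\con$ by the finite case just proved, and the family is $\sqsubseteq$-directed: given $(F_+,F_-)$ and $(F_+',F_-')$, the pair $(F_+\cup F_+',\, F_-\cup F_-')$ dominates both in each coordinate, so $\alpha_{F_+\cup F_+',\,F_-\cup F_-'}$ is an upper bound. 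Then (\con--$\bigsqcup\!{}^{\upset}$) gives $\dirsqcup_{F_\pm} \alpha_{F_+,F_-} \in \con$. The step that must be spelled out explicitly, and which I expect to be the only delicate point, is identifying this directed supremum with the target pair: invoking the pointwise formula for $\dirsqcup$ together with the cofinality of finite joins in arbitrary joins, one gets $\dirsqcup_{F_\pm}\alpha_{F_+,F_-} = (\bigvee_{i\in I_+} b^i_+,\ \bigvee_{i\in I_-} b^i_-)$, which completes the proof. Everything else is routine bookkeeping over the two-dimensional directed index.
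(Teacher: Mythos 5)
Your proposal is correct and takes essentially the same route as the paper's own proof: the finite case is settled by closing under the logical operations (\con--$\wedge,\vee$) (you merely apply $\vee$-closure before $\wedge$-closure, where the paper does the reverse), and the general case by noting that the finite subjoins $(\bigvee_{i\in F_+} b^i_+,\, \bigvee_{i\in F_-} b^i_-)$ form a $\sqsubseteq$-directed family whose pointwise-computed supremum is the target pair, so (\con--$\bigsqcup\!{}^{\upset}$) applies. No gaps; nothing further is needed.
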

\begin{proof}
    Only the right-to-left implication remains to be proved. Assume that $(b^{i}_+,\ b^{i'}_-)\in \con$, for all $(i,i')\in I_+\ttimes I_-$. Since \con{} is $\wedge$-closed, for an $i\in I_+$ and a finite $F_-\finsubseteq I_-$, $(b^i_+,\, \bigvee_{i \in F_-} b^i_-)\in \con$. Similarly, since \con{} is $\vee$-closed, for finite $F_-\finsubseteq I_-$ and $F_+\finsubseteq I_+$, $(\bigvee_{i \in F_+} b^i_+,\, \bigvee_{i \in F_-} b^{i}_-)\in \con$. Notice that the set $M = \{ (\bigvee_{i \in F_+} b^i_+,\, \bigvee_{i \in F_-} b^{i}_-) : F_+\finsubseteq I_+ \text{ and } F_-\finsubseteq I_- \}$ is directed and $\dirsqcup M = (\bigvee_{i\in I_+} b^i_+,\  \bigvee_{i'\in I_-} b^{i'}_-)$. Moreover, because $M\subseteq \con$ and \con{} is closed under directed suprema, $\dirsqcup M\in \con$.
\end{proof}
This means that we can specify \con{} by a subset $\conO\subseteq B_+\ttimes B_-$ such that $\con = \CON[{\sem[\conO]}]$ where \CON[{\sem[\conO]}] is \emph{the smallest consistency relation} containing $\sem[\conO] = \{ (\sem[\alpha_+]_+, \sem[\alpha_-]_-) : \alpha\in \conO\}$.\footnote{Formally, for an $R\subseteq L_+\ttimes L_-$, \[ \CON[R] = \bigcap \{ R' \subseteq L_+\ttimes L_- ~|~ R \subseteq R', R' \text{ is $\downset$-closed, closed under $\wedge,\vee,\dirsqcup$, and $\dff,\dtt\in R'$} \}.\]}

In general, we cannot hope to do the same for \tot{}, i.e.\ find a $\totO \subseteq B_+\ttimes B_-$ such that $\tot = \TOT[{\sem[\totO]}]$ where \TOT[{\sem[\totO]}] is \emph{the smallest totality relation} containing $\sem[\totO]$. We would have to specify \tot{} by a subset of $\p P(B_+)\ttimes \p P(B_-)$. However, the special kind of presentations, when $\totO \subseteq B_+\ttimes B_-$, turns out to be sufficient for our purposes.

\begin{definition}
A tuple \dPres{} is a \emph{presentation of a pre-d-frame} if
\begin{axioms}[d-Pres]
    \item $(B_+, \p C_+)$ and $(B_-, \p C_-)$ are presentations of frames,
    \item $\conO\subseteq B_+\ttimes B_-$ and $\totO\subseteq B_+\ttimes B_-$.
\end{axioms}
The resulting pre-d-frame is obtained in two steps. First, we generate the frames of \p C-ideals \CIdl{} and then we generate the consistency and totality relations from the embedded relations $\sem[\conO], \sem[\totO] \subseteq \CIdl[+]\!\ttimes \CIdl[-]$. We obtain the following pre-d-frame:
\[  (\CIdl[+], \CIdl[-];\ \CON[{\sem[\conO]}], \TOT[{\sem[\totO]}])  \label{e:gen-d-frm}\tag{gen.} \]
\end{definition}

Similarly to its frame counterpart, $\sem\!=\!(\sem_+, \sem_-)$ has the following universal property.

\begin{lemma}[Universality]\label{l:dfrm-univ}
    Let \dPres{} be a presentation of a pre-d-frame. Then,
    $$ \sem\colon \dB \to (\CIdl[+],\, \CIdl[-];\; \CON[{\sem[\conO]}], \TOT[{\sem[\totO]}]),$$
    is presentation preserving, i.e.\ its components are meet-semilattice homomorphisms that transform covers from $\p C_\pm$ into joins and together they preserve $\conO$ and~$\totO$.

    Also, if \p M is a pre-d-frame and $f = (f_+, f_-)\colon \dB \to \p M$ is a presentation-preserving pair of maps, then there is a unique d-frame homomorphism
    \[ \overline f\colon (\CIdl[+],\, \CIdl[-];\; \CON[{\sem[\conO]}], \TOT[{\sem[\totO]}]) \to \p M\]
    such that $f = \overline f \circ \sem$. Moreover, the components of $\overline f$ are the unique frame homomorphisms that are guaranteed to exist by Lemma~\ref{l:frm-univ}.
\end{lemma}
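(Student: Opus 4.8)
The plan is to reduce both parts of the statement to the frame-level universality of Lemma~\ref{l:frm-univ} and to close up the two relations at the very end through a minimality argument. First I would verify that $\sem$ is itself presentation preserving. Its components $\sem_\pm\colon B_\pm\to\CIdl[\pm]$ are meet-semilattice homomorphisms that turn the covers of $\p C_\pm$ into joins directly by Lemma~\ref{l:frm-univ}, and the pair preserves $\conO$ and $\totO$ for a trivial reason: for $\alpha\in\conO$ the pair $(\sem[\alpha_+]_+,\sem[\alpha_-]_-)$ belongs to $\sem[\conO]$ and hence to $\CON[{\sem[\conO]}]$, and symmetrically on the totality side.

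For the universal property I would start from a presentation-preserving pair $f=(f_+,f_-)\colon\dB\to\p M$ into a pre-d-frame $\p M=(M_+,M_-;\,\con^\p M,\tot^\p M)$. Each $f_\pm$ is a meet-semilattice homomorphism carrying covers to joins, so Lemma~\ref{l:frm-univ} hands me a unique frame homomorphism $\overline{f_\pm}\colon\CIdl[\pm]\to M_\pm$ with $f_\pm=\overline{f_\pm}\circ\sem_\pm$. I would set $\overline f=(\overline{f_+},\overline{f_-})$; its components are then forced to be exactly these, so the uniqueness of $\overline f$ is inherited from the componentwise uniqueness already present in Lemma~\ref{l:frm-univ}. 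What remains is to show that $\overline f$ is a \emph{d-frame} homomorphism, i.e.\ that it sends $\CON[{\sem[\conO]}]$ into $\con^\p M$ and $\TOT[{\sem[\totO]}]$ into $\tot^\p M$.

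The crux, and the step I expect to carry all the weight, is the observation that the preimage $R=\{\,\alpha\in\CIdl[+]\ttimes\CIdl[-] : \overline f(\alpha)\in\con^\p M\,\}$ is itself a consistency relation. The point is that the operations $\wedge,\vee$ and the directed suprema $\dirsqcup$ on a product are computed componentwise from the frame operations of the two factors, which $\overline{f_+}$ and $\overline{f_-}$ preserve, while $\overline f$ is also $\sqsubseteq$-monotone and sends $\dff,\dtt$ to $\dff,\dtt$ (frame homomorphisms preserve $0$ and $1$). Since $\con^\p M$ is a genuine consistency relation (as $\p M$ is a pre-d-frame), $R$ therefore inherits $\downset$-closure, closure under $\wedge$, $\vee$ and directed suprema, and containment of $\dff,\dtt$ — precisely the defining conditions of a consistency relation. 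Because $f$ preserves $\conO$ we have $\sem[\conO]\subseteq R$, and as $\CON[{\sem[\conO]}]$ is by definition the smallest consistency relation containing $\sem[\conO]$, minimality forces $\CON[{\sem[\conO]}]\subseteq R$, which is the desired preservation of consistency. The totality case is identical, now using the totality axioms (upward closure, closure under $\wedge,\vee$, containment of $\dff,\dtt$) with no appeal to directed suprema. I expect no real difficulty beyond this preimage computation; it is entirely routine but is where everything happens, relying on the pointwise description of the product operations and on frame homomorphisms preserving joins, finite meets and bounds. It is worth noting that the bonding axiom (\con-\tot) is never invoked, which is consistent with $\p M$ being only a pre-d-frame.
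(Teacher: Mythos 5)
Your proposal is correct and follows essentially the same route as the paper's own proof: both parts reduce to Lemma~\ref{l:frm-univ} for the frame components, and the preservation of the generated relations is obtained exactly as in the paper, by observing that $\overline f^{-1}[\con^{\p M}]$ (resp.\ $\overline f^{-1}[\tot^{\p M}]$) is $\downset$-closed (resp.\ $\upset$-closed), closed under $\wedge$, $\vee$ (and $\dirsqcup$ in the consistency case), and contains $\dtt,\dff$, so that minimality of $\CON[{\sem[\conO]}]$ and $\TOT[{\sem[\totO]}]$ forces the required inclusions. The paper phrases this via the equivalence $\overline f[\CON[{\sem[\conO]}]]\subseteq\con^{\p M}$ iff $\CON[{\sem[\conO]}]\subseteq\overline f^{-1}[\con^{\p M}]$, which is precisely your preimage argument.
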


\section{Generating d-frames}\label{s:d-frames}

So far we made no attempt in making sure that the axiom (\con-\tot) is satisfied in the generated pre-d-frame. Let us fix a presentation \dPres{} for the rest of this section and, because both frame components stay intact after we generate them, let us denote them by $L_\pm \DEFEQ \CIdl$. Also, for brevity, we will identify $B_\pm$ with $\sem[B_\pm]_\pm\subseteq L_\pm$ and, also, \conO{} and $\totO$ with $\sem[\conO]$ and $\sem[\totO]\subseteq L_+\ttimes L_-$, respectively.

The question for this section is: Under which conditions for \dPres{} is the generated pre-d-frame
$$ \dLGen $$
a d-frame? We solve this problem (partially) by showing that the following conditions are sufficient (though not necessarily minimal):
\begin{enumerate}
    \item ($\downset\conWV$-ind$_\pm$), from Section~\ref{s:one-step}, which will ensure that the structure of \CON{} is ``sufficiently simple'', and
    \item ($\lambda^4_\pm$-\con-\tot), from Section~\ref{s:chasing-contot}, which is just a simple instance of (\con-\tot).
\end{enumerate}


\subsection{The structure of \CON{} and \TOT}

Before we get to the two conditions, we show that the relations \CON{} and \TOT{} can be generated more explicitly. As in the HSP theorem from universal algebra, we can close \conO{} and \totO{} under the operations they should be closed under (e.g.\ $\wedge$, $\vee$, etc.) and, if we proceed in a certain order, we do not have to repeat any of the steps.

Let $R\subseteq L_+\ttimes L_-$ be a any relation. We say that $R$ is $\wedge$-closed (resp.\ $\vee$-closed), if for every $\alpha, \beta\in R$, $\alpha\wedge\beta\in R$ (resp.\ $\alpha\vee\beta\in R$). By $\downset R$ denote the downwards closure of $R$ in the $\sqsubseteq$-ordering, i.e.\ the relation $\{ \alpha \in L_+\ttimes L_- ~|~ \exists \beta \in R.\ \alpha \sqsubseteq \beta \}$ and define $\upset R$ similarly.

Finally, define $\p D(R) \DEFEQ \{\, \dirsqcup A ~|~ A \dirsubseteq R \}$.\footnote{$A\dirsubseteq R$ means that $A$ is a directed subset of $R$ in the $\sqsubseteq$-order.} Note that $\p D(R)$ is only a \emph{``one-step''} closure under joins of directed subsets in $\sqsubseteq$-order. $\p D(R)$ might still contain directed subsets which do not have suprema in $\p D(R)$. To close $R$ under all directed suprema, one would have to iterate this process. However, as we will see later, there are natural conditions under which only one application is enough.

\begin{lemma}\label{l:closure-props}
    Let $L_+, L_-$ be two frames and let $R\subseteq L_+\ttimes L_-$ be a relation. Then:
    \begin{enumerate}
        \item If $R$ is $(\wedge,\vee)$-closed then $\downset R$ and $\upset R$ in $L_+\ttimes L_-$ are also $(\wedge,\vee)$-closed.
        \item If $R$ is $(\wedge,\vee)$-closed then the relation $\p D(R)$ is still $(\wedge,\vee)$-closed.
        \item If $R$ is downwards closed then the relation $\p D(R)$ is still downwards closed.
    \end{enumerate}
\end{lemma}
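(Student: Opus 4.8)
The plan rests on one elementary observation that drives all three parts: the logical operations $\wedge$ and $\vee$, as well as the information meet $\sqcap$, are all \emph{monotone} with respect to the information order $\sqsubseteq$. This is immediate from the pointwise formulas. For instance, if $\alpha \sqsubseteq \alpha'$ and $\beta \sqsubseteq \beta'$, then reading off $\alpha \wedge \beta = (\alpha_+ \wedge \beta_+, \alpha_- \vee \beta_-)$ and using that $\wedge$ and $\vee$ are monotone in each frame separately yields $\alpha \wedge \beta \sqsubseteq \alpha' \wedge \beta'$, and symmetrically for $\vee$ and $\sqcap$. With this in hand, Part 1 is direct: given $\alpha, \beta \in \downset R$, pick witnesses $\alpha', \beta' \in R$ with $\alpha \sqsubseteq \alpha'$ and $\beta \sqsubseteq \beta'$; monotonicity gives $\alpha \wedge \beta \sqsubseteq \alpha' \wedge \beta'$, and $\alpha' \wedge \beta' \in R$ because $R$ is $\wedge$-closed, so $\alpha \wedge \beta \in \downset R$; the $\vee$ case is the same. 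The argument for $\upset R$ is the mirror image, with the witnesses lying below rather than above.

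For Part 2, I would take $\gamma = \dirsqcup A$ and $\delta = \dirsqcup B$ with $A, B \dirsubseteq R$ and form the set $A \wedge B = \{\, a \wedge b : a \in A,\ b \in B \,\}$. Since $R$ is $\wedge$-closed this lies in $R$, and monotonicity of $\wedge$ makes it $\sqsubseteq$-directed: an upper bound of $a_1 \wedge b_1$ and $a_2 \wedge b_2$ is $a_3 \wedge b_3$, where $a_3 \in A$ and $b_3 \in B$ are common upper bounds. The crux is then the identity $\dirsqcup (A \wedge B) = \gamma \wedge \delta$. Computing both sides coordinatewise, the first coordinate reduces to $(\bigvee_a a_+) \wedge (\bigvee_b b_+) = \bigvee_{a,b} (a_+ \wedge b_+)$, which is exactly the frame distributivity law (Frm), while the second reduces to $(\bigvee_a a_-) \vee (\bigvee_b b_-) = \bigvee_{a,b} (a_- \vee b_-)$, a routine rearrangement of joins. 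Hence $\gamma \wedge \delta \in \p D(R)$. The $\vee$ case is identical, using $A \vee B$, with distributivity now invoked on the \emph{second} coordinate.

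For Part 3, given $\gamma = \dirsqcup A \in \p D(R)$ and any $\delta \sqsubseteq \gamma$, I would cut $A$ down to $A_\delta = \{\, a \sqcap \delta : a \in A \,\}$. Each $a \sqcap \delta \sqsubseteq a \in R$, so downward closure of $R$ places $A_\delta$ inside $R$, and monotonicity of $\sqcap$ keeps it $\sqsubseteq$-directed. Computing its supremum coordinatewise gives $\dirsqcup A_\delta = (\gamma_+ \wedge \delta_+,\ \gamma_- \wedge \delta_-)$ again by frame distributivity, and since $\delta \sqsubseteq \gamma$ this collapses to $\delta$ itself; thus $\delta \in \p D(R)$.

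The only genuine content — and the step I expect to require the most care — is the supremum identities in Parts 2 and 3, as these are precisely where the frame distributivity law (Frm) enters; everything else is monotonicity bookkeeping. One convention worth flagging is that ``directed'' is read as \emph{nonempty} directed, as is standard and as the DCPO requirement among the d-frame axioms already presupposes: were empty directed sets admitted, the identity $\dirsqcup (A \wedge B) = \gamma \wedge \delta$ in Part 2 would break, the left-hand side collapsing to $\bot$ while the right-hand side need not.
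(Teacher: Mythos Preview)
Your proposal is correct and follows essentially the same route as the paper's proof: Part~1 via monotonicity of $\wedge,\vee$ in $\sqsubseteq$ applied to chosen witnesses, Part~2 via the directed set $\{a\wedge b : a\in A,\ b\in B\}$ and a coordinatewise calculation using (Frm), and Part~3 via $\{a\sqcap\delta : a\in A\}$. Your presentation is slightly more explicit about where frame distributivity enters and about the nonempty-directed convention, but the argument is the same.
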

\begin{proof}
    For 1., let $\alpha, \beta \in \downset R$. This means that there are $\alpha', \beta'\in R$ such that $\alpha \sqsubseteq \alpha'$ and $\beta\sqsubseteq \beta'$. Observe that $(\alpha\wedge\beta)_+ = \alpha_+\wedge \beta_+ \leq \alpha'_+\wedge \beta'_+ = (\alpha'\wedge\beta')_+$ and similarly $(\alpha\wedge\beta)_- \leq (\alpha'\wedge\beta')_-$. Therefore, $\alpha\wedge\beta \sqsubseteq \alpha'\wedge\beta'\in R$ and $\alpha\wedge\beta \in \downset R$. Proving closedness $\downset R$ under $\vee$ is the same and the same reasoning also applies to $\upset R$. For~2., let $\alpha, \beta\in \p D(R)$. From the definition $\alpha = \bigsqcup\dirup_i \alpha^i$ and $\beta = \bigsqcup\dirup_j \beta^j$ for some $\alpha^i$'s and $\beta^j$'s from $R$. Let us calculate,
    \begin{align*}
        \alpha \wedge \beta
            &= (\bigvee\dirup_i \alpha^i_+\wedge\bigvee\dirup_j \beta^j_+, \bigvee\dirup_i \alpha^i_- \vee \bigvee\dirup_j \beta^j_-)\\
            &= (\bigvee\dirup_i \bigvee\dirup_j (\alpha^i_+\wedge\beta^j_+), \bigvee\dirup_i\bigvee\dirup_j (\alpha^i_- \vee \beta^j_-)) \\
            &= (\bigvee\dirup_{i,j} (\alpha^i_+\wedge\beta^j_+), \bigvee\dirup_{i,j} (\alpha^i_- \vee \beta^j_-))
    \end{align*}
    Notice that the set $\{\alpha^i\wedge\beta^j : i\in I, j\in J\}$ is directed since $\{\alpha^i \}_i$ and $\{\beta^j\}_j$ are and, moreover, $\alpha^i\wedge\beta^j\in R$ for all $i,j$ since $R$ is closed under logical meets.

    For 3., let $\beta \sqsubseteq \bigsqcup\dirup_i \alpha_i$ where $\alpha_i$'s are from $R$. Then, $\beta = \bigsqcup\dirup_i (\beta \sqcap \alpha_i)\in \p D(R)$ because the set $\{\beta \sqcap \alpha_i \}_i$ is a directed subset of $R$.
\end{proof}

Lemma~\ref{l:closure-props} shows the order in which one can generate \CON{} and \TOT{}. Set $\conWV$ to be the algebraic closure of $\conO$ under all \emph{finite} logical joins and meets in $L_+\ttimes L_-$, and define $\totWV$ correspondingly. Then we have:

\begin{corollary}
\[
\CON = \bigcup_{\iota \in \text{Ord}} \p D^\iota(\downset \conWV)
\qtq{and}
\TOT = \upset \totWV
\]
where, for an ordinal $\iota$ and a limit ordinal $\lambda$,
\[ \p \p D^0(R) \DEFEQ R,\quad D^{\iota + 1}(R) \DEFEQ \p D(\p D^\iota(R)) \qtq{and} \p D^\lambda(R) \DEFEQ \bigcup_{\iota < \lambda} \p D^\iota(R)\;.\]
\end{corollary}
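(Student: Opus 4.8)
The plan is to prove each equality by the minimality route: I would check that the right-hand side has all the closure properties required of a consistency (resp. totality) relation and contains $\conO$ (resp. $\totO$), and then that it is contained in every such relation. Recall that a totality relation must be $\upset$-closed, closed under the logical $\wedge$ and $\vee$, and contain $\dff$ and $\dtt$, whereas a consistency relation must moreover be $\downset$-closed and closed under directed $\sqsubseteq$-suprema. The essential point, for which Lemma~\ref{l:closure-props} was prepared, is that these closures can be imposed in a fixed order without ever having to return to an earlier one.

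For the totality equality $\TOT = \upset \totWV$ I would proceed as follows. By construction $\totWV$ contains $\totO$, is $(\wedge,\vee)$-closed, and contains $\dff$ and $\dtt$ (the empty logical join and meet). By Lemma~\ref{l:closure-props}(1) the relation $\upset \totWV$ is still $(\wedge,\vee)$-closed, it is trivially $\upset$-closed, and it still contains the two constants, so it is a totality relation containing $\totO$. Conversely, any totality relation $T \supseteq \totO$ is $(\wedge,\vee)$-closed and hence contains $\totWV$, and being $\upset$-closed it contains $\upset \totWV$. Minimality gives the equality; note that no directed-suprema step is needed here because the totality axioms do not require closure under directed suprema.

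For the consistency equality, write $C^\infty \DEFEQ \bigcup_{\iota \in \text{Ord}} \p D^\iota(\downset \conWV)$. Since the $\p D^\iota(\downset \conWV)$ form an increasing chain of subsets of the fixed set $L_+ \ttimes L_-$, the chain stabilises at some ordinal, so that $\p D(C^\infty) = C^\infty$; this exactly says that every directed subset of $C^\infty$ has its supremum in $C^\infty$, i.e.\ $C^\infty$ is closed under directed suprema. That $C^\infty$ is also $(\wedge,\vee)$-closed and $\downset$-closed I would establish by transfinite induction built on Lemma~\ref{l:closure-props}: the base $\downset \conWV$ is $(\wedge,\vee)$-closed by part (1) and $\downset$-closed by definition, the successor step $\p D(-)$ preserves both by parts (2) and (3), and at union stages the closures pass to the increasing union (for $\wedge$ and $\vee$ one uses that any two given elements already lie together in a common stage). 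Since $\dff, \dtt \in \downset \conWV$, this shows $C^\infty$ is a consistency relation containing $\conO$. For minimality, any consistency relation $C \supseteq \conO$ contains $\conWV$ (being $(\wedge,\vee)$-closed), then $\downset \conWV$ (being $\downset$-closed), and then each $\p D^\iota(\downset \conWV)$ by induction on $\iota$ using closure under directed suprema; hence $C \supseteq C^\infty$. The one genuinely delicate point — and the reason this is a corollary of Lemma~\ref{l:closure-props} rather than a direct computation — is that applying the directed-suprema operation $\p D$ must not destroy the $(\wedge,\vee)$- and $\downset$-closure obtained in the earlier steps; that is precisely what parts (2) and (3) guarantee, so the operations need not be interleaved and the iteration cleanly produces the smallest consistency relation.
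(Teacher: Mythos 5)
Your proof is correct and is essentially the argument the paper intends: the Corollary is stated as an immediate consequence of Lemma~\ref{l:closure-props}, whose whole purpose is exactly the point you isolate, namely that $\p D(-)$ preserves the $(\wedge,\vee)$- and $\downset$-closures so the operations can be applied in a fixed order, and the remaining ingredients (minimality of $\CON$/$\TOT$, stabilisation of the transfinite iteration, and the absence of a directed-suprema axiom for \tot) are routine and handled correctly in your write-up.
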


\subsection{When is one step enough?}\label{s:one-step}

Proving (\con-\tot) for \CON{} and \TOT{} as it is, turned out to be too hard and, unless the authors have missed something obvious, we need to assume additional properties about the presentation. One of the reasons for the difficulty is the fact that \CON{} is computed as an iteration of $\p D(-)$. In this subsection, we focus on the question whether there are natural properties, for a relation $R\subseteq L_+\ttimes L_-$, which guarantee $\p D(\p D(R)) = \p D(R)$.

At the moment, $R$ can be any relation on the frames but for the application to presentations we would like to instantiate $R$ with $\downset \conWV$. Because of that we will assume that $R$ is downwards closed in $\sqsubseteq$-order and that it is closed under $\wedge$ and $\vee$.

We start with an important definition. Two sets $A_+ \subseteq L_+$ and $A_- \subseteq L_-$ are said to be \emph{$R$-independent} if $\forall a_+\in A_+$ and $\forall a_-\in A_-$, $(a_+, a_-)\in R$.

\begin{observation}\label{o:downset-indep}
    For every $\alpha \in R$, the sets $\p B_+(\alpha_+)$ and $\p B_-(\alpha_-)$ are $R$-independent where $\p B_\pm(\alpha_\pm) \,\DEFEQ\, \downset \alpha_\pm \cap B_\pm$.
\end{observation}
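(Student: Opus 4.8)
The plan is to reduce the entire claim to the single standing hypothesis that $R$ is downwards closed in the information order $\sqsubseteq$; none of the other assumed closure properties (under $\wedge$, $\vee$, or directed suprema) are needed. Fix $\alpha \in R$ and take arbitrary witnesses $b_+ \in \p B_+(\alpha_+)$ and $b_- \in \p B_-(\alpha_-)$. By the very definition of $\p B_\pm$, this unpacks to $b_+ \in B_+$ with $b_+ \leq \alpha_+$, and $b_- \in B_-$ with $b_- \leq \alpha_-$.

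The key observation is that these two componentwise inequalities say exactly that $(b_+, b_-) \sqsubseteq (\alpha_+, \alpha_-) = \alpha$, since the information order is defined as $\leq$ on each coordinate. As $\alpha \in R$ and $R$ is downwards closed in $\sqsubseteq$, we immediately get $(b_+, b_-) \in R$. Because $b_+$ and $b_-$ ranged over all of $\p B_+(\alpha_+)$ and $\p B_-(\alpha_-)$, this is precisely the assertion that these two sets are $R$-independent.

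Put geometrically, the whole content is that the ``rectangle'' $\p B_+(\alpha_+)\ttimes \p B_-(\alpha_-)$ sits inside $\downset\alpha$ (taken in the $\sqsubseteq$-order on $L_+\ttimes L_-$), and $R$ swallows that downset once it contains $\alpha$. I therefore expect no genuine obstacle here: the statement is immediate after unwinding the three definitions involved (that of $\p B_\pm$, of the order $\sqsubseteq$, and of $R$-independence), which is also why it is recorded as an \emph{observation} rather than a lemma. The only ``choice'' in the proof is recognising that it is downward closure, and not any of the algebraic closure conditions carried by $\downset\conWV$, that does the work.
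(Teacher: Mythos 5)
Your proof is correct and coincides with the argument the paper implicitly relies on: the statement is recorded as an observation precisely because, once one unwinds the definitions of $\p B_\pm$, of $\sqsubseteq$, and of $R$-independence, it follows in one step from the standing assumption that $R$ is downwards closed in the information order. Your remark that the $\wedge$-, $\vee$- and directed-suprema closure properties play no role here is also accurate.
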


It turns out that $\p D(R)$ can reformulated by using $R$-independent sets. Let $\alpha\in \p D(R)$. From the definition, there is some directed $A\dirsubseteq R$ such that $\alpha = \dirsqcup A$. Because $\p B_\pm(-)$ are monotone and $A$ is directed, the sets $\{ \p B_+(\alpha_+) : \alpha\in A\}$ and $\{ \p B_-(\alpha_-) : \alpha\in A\}$ are both also directed (in the subset order) and so we have:
\begin{align}
    \forall A\dirsubseteq R \implies \bigcup_{\alpha\in A} \p B_+(\alpha_+) \text{ and } \bigcup_{\alpha\in A} \p B_-(\alpha_-) \text{ are $R$-independent} \label{e:dircup-indep}\tag{$\star$}
\end{align}
Moreover, because $L_\pm$ is generated by $B_\pm$ and every $x\in L_\pm$ is equal to $\bigvee \p B_\pm(x)$, we obtain that $\alpha = (\dirvee_{\alpha\in A} \alpha_+, \dirvee_{\alpha\in A} \alpha_-) = (\bigvee \p A_+, \bigvee \p A_-)$ where $\p A_\pm = \bigcup_{\alpha\in A} \p B_\pm(\alpha_\pm)$.

It might seem that $\p D(-)$ is just a special case of a more general construction:
\[ \Di(R) = \{ (\bigvee A_+,\, \bigvee A_-) ~|~ A_+\subseteq B_+,\, A_-\subseteq B_- \text{ s.t.\ } A_+ \text{ and } A_- \text{ are $R$-independent} \} \]
What we have proved in the previous paragraphs is that $\p D(R) \subseteq \Di(R)$. In fact, both closures are equivalent:

\begin{lemma}\label{l:d-equals-db}
    $\p D(R) = \Di(R)$
\end{lemma}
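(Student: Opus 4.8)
The plan is to prove the two inclusions separately. The inclusion $\p D(R) \subseteq \Di(R)$ is already established in the paragraph preceding the lemma: given $\alpha = \dirsqcup A$ for some directed $A\dirsubseteq R$, setting $\p A_\pm = \bigcup_{\alpha\in A} \p B_\pm(\alpha_\pm)$ yields $R$-independent sets by~(\ref{e:dircup-indep}), and $\alpha = (\bigvee \p A_+, \bigvee \p A_-)$ since each $L_\pm$ is generated by $B_\pm$ so every $x \in L_\pm$ equals $\bigvee \p B_\pm(x)$. So the real content is the reverse inclusion $\Di(R) \subseteq \p D(R)$.

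For that direction, I would start with an element $(\bigvee A_+, \bigvee A_-) \in \Di(R)$, where $A_+ \subseteq B_+$ and $A_- \subseteq B_-$ are $R$-independent, meaning $(a_+, a_-) \in R$ for every $a_+ \in A_+$ and $a_- \in A_-$. The natural candidate directed family inside $R$ whose supremum recovers this element is
\[
    M = \{\, (\textstyle\bigvee F_+,\, \bigvee F_-) ~|~ F_+ \finsubseteq A_+,\ F_- \finsubseteq A_- \,\},
\]
indexed by pairs of finite subsets. First I would check that each such pair $(\bigvee F_+, \bigvee F_-)$ actually lies in $R$: since $R$ is $\vee$-closed and contains every $(a_+, a_-)$ with $a_+\in A_+$, $a_-\in A_-$, a straightforward finite induction (exactly as in the Lemma on page with the $\con$-generators) gives $(\bigvee F_+, \bigvee F_-)\in R$ for all finite $F_\pm$. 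Here I must be slightly careful about the empty-set / bottom cases, but these are covered because $R$ is downward closed and (in the intended application) contains $\dff, \dtt$; more robustly, one simply restricts to nonempty finite subsets, which already generate the same suprema when $A_\pm$ are nonempty, and handles the degenerate case where some $A_\pm$ is empty directly.

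Next I would verify that $M$ is $\sqsubseteq$-directed: given two members coming from $(F_+, F_-)$ and $(F_+', F_-')$, their join in the information order is dominated by the member coming from $(F_+\cup F_+', F_-\cup F_-')$, because $\bigvee(F_+\cup F_+') = \bigvee F_+ \vee \bigvee F_+'$ and likewise on the minus side, and $\sqsubseteq$ compares both coordinates with $\leq$. Finally I would compute $\dirsqcup M$, the information-order supremum, which is taken coordinatewise: the plus coordinate is $\bigvee\{\bigvee F_+ : F_+\finsubseteq A_+\} = \bigvee A_+$ and similarly the minus coordinate is $\bigvee A_-$. Hence $(\bigvee A_+, \bigvee A_-) = \dirsqcup M$ with $M \dirsubseteq R$, so it belongs to $\p D(R)$, completing the reverse inclusion and the proof.

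The main obstacle, though a mild one, is the interplay between the two orders: $\p D(R)$ is a closure under suprema of subsets directed in the \emph{information} order $\sqsubseteq$, whereas $R$-independence and the ambient $\vee$-closure are phrased with the \emph{logical} join $\vee$, which acts as join on the plus side but as meet on the minus side. I need the family $M$ to be directed in $\sqsubseteq$ and to have $\sqsubseteq$-supremum equal to the target, while simultaneously lying in a set closed under $\vee$; keeping these two orders straight — in particular noticing that enlarging $F_-$ increases $\bigvee F_-$ in $L_-$ and hence moves \emph{up} in $\sqsubseteq$ — is the one point that requires attention rather than routine bookkeeping.
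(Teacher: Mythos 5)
Your overall strategy coincides with the paper's proof: the inclusion $\p D(R)\subseteq \Di(R)$ is taken from the preceding discussion, and for the converse both you and the paper form the family of pairs $(\bigvee F_+,\bigvee F_-)$ indexed by finite subsets $F_\pm\finsubseteq A_\pm$, argue that each such pair lies in $R$, check $\sqsubseteq$-directedness, and compute the supremum coordinatewise. Your directedness argument, the supremum computation, and your handling of the degenerate empty-set cases are all fine.

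There is, however, a genuine flaw in the justification of the one step that carries the proof, namely that $(\bigvee F_+,\bigvee F_-)\in R$. You derive this from ``$R$ is $\vee$-closed'' by a finite induction. But the logical join is $\alpha\vee\beta=(\alpha_+\vee\beta_+,\ \alpha_-\wedge\beta_-)$: it joins the plus coordinates and \emph{meets} the minus coordinates. So closing the set $\{(a_+,a_-) : a_+\in F_+,\ a_-\in F_-\}$ under $\vee$ alone can only produce elements of the form $(\bigvee F_+,\bigwedge F_-)$, never $(\bigvee F_+,\bigvee F_-)$. To raise the minus coordinate to $\bigvee F_-$ you must invoke the $\wedge$-closure hypothesis, since the logical meet is what joins minus coordinates: fix $f_-\in F_-$ and use $\vee$-closure to get $(\bigvee F_+, f_-)\in R$; then take the logical meet of these finitely many elements as $f_-$ ranges over $F_-$, which has plus coordinate $\bigvee F_+$ and minus coordinate $\bigvee F_-$. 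This two-step use of \emph{both} closure properties is exactly what the paper's proof does, and also what the proof of the generators lemma you cite does — so your parenthetical reference points at the right argument, but the property you explicitly appeal to does not suffice to run it. Notably, your closing paragraph identifies precisely this interplay ($\vee$ acting as a meet on the minus side) as ``the one point that requires attention'', yet the argument never resolves it: $\wedge$-closedness is nowhere invoked, and without it the key membership claim fails. The repair is one line, but as written the central step does not follow.
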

\begin{proof}
    Only the right-to-left inclusion remains to be proved. Let $A_+\subseteq B_+$ and $A_-\subseteq B_-$ be $R$-independent. Observe that for two finite sets $F_+ \finsubseteq A_+$ and $F_- \finsubseteq A_-$, $(\bigvee F_+, \bigvee F_-)\in R$. This is because R is $\vee$-closed and so $(\bigvee F_+, f_-)\in R$ for every $f_-\in F_-$ and, because R is $\wedge$-closed, $(\bigvee F_+, \bigvee F_-)\in R$. Clearly, the set $\p A = \{ (\bigvee F_+, \bigvee F_-) : F_+ \finsubseteq A_+ \text{ and } F_- \finsubseteq A_-\}$ is a directed subset of $R$ and $(\bigvee A_+, \bigvee A_-) = \dirsqcup \p A\in \p D(R)$.
\end{proof}

Because $\p D(R)$ is also downwards closed and closed under $\wedge$ and $\vee$ (Lemma~\ref{l:closure-props}), $\p D(\p D(R)) = \Di(\Di(R))$ and it might seem that this is already equal to $\Di(R)$. But, this is not true in general. Take, for example, $\p A_+ = \{ a_+ \}$ and $\p A_- = \{ a^1_-, a^2_- \}$ which are $\Di(R)$-independent. Each of $(a_+, a^1_-)$ and $(a_+, a^2_-)\in \Di(R)$ is witnessed by a pair of $R$-independent sets $A^1_+$ and $A^1_-$, and $A^2_+$ and $A^2_-$, respectively, such that $a_+ = \bigvee A^1_+ = \bigvee A^2_+$ and $a^1_- = \bigvee A^1_-$ and $a^2_- = \bigvee A^2_-$. However, because there is no reason to believe that $A^1_+$ and $A^2_+$ are equal, there are no obvious candidates for $R$-independent sets which would have  $(a_+, a^1_- \vee a^2_-)$ as their supremum. To overcome this problem, we assume the following condition:
\begin{axioms}
\item[($R$-ind)] For all $\forall \alpha \in \Di(R)$, $\p B_+(\alpha_+)$ and $\p B_-(\alpha_-)$ are $R$-independent.
\end{axioms}
This guarantees, for every $\alpha\in \Di(R)$, a canonical choice of $R$-independent sets, namely $A_\pm = \p B_\pm(\alpha_\pm)$.

\begin{lemma}\label{l:reduce-dirjoin}
    $\p D(\Di(R)) \subseteq \Di(R)$
\end{lemma}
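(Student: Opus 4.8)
The plan is to take an arbitrary $\alpha \in \p D(\Di(R))$ and exhibit a single pair of $R$-independent sets whose joins recover $\alpha_+$ and $\alpha_-$, thereby placing $\alpha$ in $\Di(R)$. By definition of $\p D(-)$ we have $\alpha = \dirsqcup A$ for some directed $A \dirsubseteq \Di(R)$. The natural candidates are the unions of the generator-downsets of the members of $A$, namely $\p A_\pm \DEFEQ \bigcup_{\beta \in A} \p B_\pm(\beta_\pm)$. Since directed suprema in $L_+\ttimes L_-$ are computed coordinatewise in the $\sqsubseteq$-order, and since every $x \in L_\pm$ equals $\bigvee \p B_\pm(x)$, we obtain at once $\alpha_\pm = \dirvee_{\beta \in A} \beta_\pm = \bigvee_{\beta \in A} \bigvee \p B_\pm(\beta_\pm) = \bigvee \p A_\pm$, so $\p A_+$ and $\p A_-$ do recover $\alpha$. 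It therefore remains only to verify that $\p A_+$ and $\p A_-$ are $R$-independent.

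This $R$-independence is the heart of the argument, and it is precisely where $(R\text{-ind})$ enters; it is the $\Di(R)$-analogue of the implication $(\star)$ that was established for directed subsets of $R$ itself. Given $b_+ \in \p A_+$ and $b_- \in \p A_-$, I would pick witnesses $\beta, \gamma \in A$ with $b_+ \in \p B_+(\beta_+)$ and $b_- \in \p B_-(\gamma_-)$, and then use directedness of $A$ to find $\delta \in A$ above both. Monotonicity of $\p B_\pm(-)$ then gives $b_+ \in \p B_+(\delta_+)$ and $b_- \in \p B_-(\delta_-)$, and at this point $(R\text{-ind})$ applied to $\delta \in \Di(R)$ yields $(b_+, b_-) \in R$. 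Hence $\p A_+$ and $\p A_-$ are $R$-independent and $\alpha = (\bigvee \p A_+, \bigvee \p A_-) \in \Di(R)$, as required.

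I expect the main obstacle to be conceptual rather than computational: one must resist applying $(R\text{-ind})$ directly to $\alpha$, which would be circular since $\alpha \in \Di(R)$ is exactly the conclusion, and instead apply it only to the members $\delta$ of $A$, which already lie in $\Di(R)$ by hypothesis. The role of directedness is precisely to amalgamate the two separate members $\beta, \gamma \in A$ into a single $\delta$ under which both $b_+$ and $b_-$ become generators of $\delta_+$ and $\delta_-$; this is exactly the step that fails in the absence of $(R\text{-ind})$, as the counterexample preceding the lemma shows, where the distinct witnessing sets $A^1_+$ and $A^2_+$ cannot be combined. One could alternatively invoke Lemma~\ref{l:d-equals-db} to read $\p D(\Di(R))$ as $\Di(\Di(R))$, but the direct computation above does not require that identification.
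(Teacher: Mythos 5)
Your proof is correct and follows essentially the same route as the paper: both take a directed $A \dirsubseteq \Di(R)$, form $\p A_\pm = \bigcup_{\beta\in A}\p B_\pm(\beta_\pm)$, apply ($R$-ind) to the members of $A$, and use directedness to amalgamate witnesses so that $\p A_+$ and $\p A_-$ are $R$-independent with $\dirsqcup A = (\bigvee \p A_+, \bigvee \p A_-)$. The only difference is cosmetic: where the paper compresses the amalgamation step into the phrase ``as in ($\star$)'', you spell it out explicitly, and you correctly identify that ($R$-ind) must be applied to the elements $\delta \in A$ rather than to $\dirsqcup A$ itself.
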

\begin{proof}
    Let $A\dirsubseteq \Di(R)$. By ($R$-ind), for every $\alpha \in A$, $\p B_+(\alpha_+)$ and $\p B_-(\alpha_-)$ are $R$-independent. As in~(\ref{e:dircup-indep}), because $A$ is directed, the sets $\p A_+ \DEFEQ \dircup_{\alpha\in A} \p B_+(\alpha_+)$ and $\p A_- \DEFEQ \dircup_{\alpha\in A} \p B_+(\alpha_+)$ are $R$-independent and $\dirsqcup A = (\bigvee \p A_+, \bigvee \p A_-)$. Hence, $\dirsqcup A \in \Di(R)$.
\end{proof}

A combination of the preceding lemmas yields the desired result:

\begin{theorem}\label{t:dd-equals-d}
    Let $R\subseteq L_+\ttimes L_-$ be downwards closed, closed under logical meets and joins. If ($R$-ind) is true for $R$, then
    $\p D(\p D(R)) = \p D(R).$
\end{theorem}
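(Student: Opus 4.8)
The plan is to derive the nontrivial inclusion $\p D(\p D(R)) \subseteq \p D(R)$ by chaining together the two reformulation results, with the hypothesis ($R$-ind) entering only through Lemma~\ref{l:reduce-dirjoin}. The reverse inclusion $\p D(R) \subseteq \p D(\p D(R))$ is immediate: for any $\alpha \in \p D(R)$ the singleton $\{\alpha\}$ is a $\sqsubseteq$-directed subset of $\p D(R)$ with $\dirsqcup\{\alpha\} = \alpha$, so in general $\p D(S) \supseteq S$ for every relation $S$. Thus only one direction requires work.

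First I would record that the hypotheses of Lemma~\ref{l:d-equals-db} are in force: $R$ is assumed closed under logical meets and joins, which is exactly what that lemma's proof consumes, so $\p D(R) = \Di(R)$. Substituting this into the inner occurrence gives $\p D(\p D(R)) = \p D(\Di(R))$. Next, since ($R$-ind) holds by assumption, Lemma~\ref{l:reduce-dirjoin} applies and yields $\p D(\Di(R)) \subseteq \Di(R)$. Finally I would rewrite $\Di(R)$ back as $\p D(R)$ using Lemma~\ref{l:d-equals-db} once more, obtaining
\[
\p D(\p D(R)) = \p D(\Di(R)) \subseteq \Di(R) = \p D(R),
\]
which together with the trivial inclusion gives the claimed equality.

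The step doing the real work—and the only place where an assumption beyond $(\wedge,\vee)$-closedness is used—is the passage $\p D(\Di(R)) \subseteq \Di(R)$ supplied by Lemma~\ref{l:reduce-dirjoin}. Its content is precisely that ($R$-ind) provides, for each $\alpha \in \Di(R)$, the \emph{canonical} $R$-independent witnesses $\p B_\pm(\alpha_\pm)$, which can then be amalgamated along a directed family exactly as in~(\ref{e:dircup-indep}). The counterexample with $\p A_+ = \{a_+\}$ and $\p A_- = \{a^1_-, a^2_-\}$ discussed just before ($R$-ind) shows that this is where the naive argument fails without that hypothesis: the ad hoc witnesses $A^1_+$ and $A^2_+$ for $(a_+, a^1_-)$ and $(a_+, a^2_-)$ need not coincide, so no obvious $R$-independent pair has $(a_+, a^1_- \vee a^2_-)$ as its supremum. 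Consequently I expect no genuine obstacle inside the proof of the theorem itself; the substantive difficulty is deferred to verifying ($R$-ind) in the intended instance $R = \downset\conWV$, which is the purpose of the condition ($\downset\conWV$-ind$_\pm$) announced at the start of Section~\ref{s:d-frames}.
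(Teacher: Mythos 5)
Your proof is correct and is essentially identical to the paper's: both chain $\p D(\p D(R)) = \p D(\Di(R)) \subseteq \Di(R) = \p D(R)$ via Lemma~\ref{l:d-equals-db} and Lemma~\ref{l:reduce-dirjoin}, together with the trivial inclusion $\p D(R) \subseteq \p D(\p D(R))$. No gaps; your added commentary on where ($R$-ind) is consumed matches the paper's intent.
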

\begin{proof}
$ \p D(\p D(R)) \overset{(Lemma~\ref{l:d-equals-db})}= \p D(\Di(R)) \overset{(Lemma~\ref{l:reduce-dirjoin})}\subseteq \Di(R) \overset{(Lemma~\ref{l:d-equals-db})}= \p D(R) \ \subseteq\ \p D(\p D(R))\qedhere$
\end{proof}

\begin{remark}
    Because $\p D(R) = \Di(R)$ is downwards closed, for every $\alpha\in \p D(R)$ and every $(b_+, b_-)\in \p B_+(\alpha_+)\ttimes \p B_-(\alpha_-)$, also $(b_+, b_-)\in \p D(R)$. Therefore, ($R$-ind) can be reformulated in the following more compact way:
\begin{axioms}
    \item[($R$-ind)] $(B_+\ttimes B_-) \cap \p D(R) \subseteq R$
\end{axioms}
\end{remark}

\subsection{Chasing down (\con-\tot)}\label{s:chasing-contot}
Finally, we can focus on the original (\con-\tot) axiom for \dLGen. We split it into two parts:
\begin{axioms}
    \item[($\lambda^0_+$-\con-\tot)] $\alpha\in \CON$, $\beta\in \TOT$ and $\alpha_+ = \beta_+ \implies \alpha_- \leq \beta_-$
    \item[($\lambda^0_-$-\con-\tot)] $\alpha\in \CON$, $\beta\in \TOT$ and $\alpha_- = \beta_- \implies \alpha_+ \leq \beta_+$
\end{axioms}
If we assume ($R$-ind) about $\downset \conWV$, then the conditions of Theorem~\ref{t:dd-equals-d} hold for $R = \downset \conWV$ and we can rewrite ($\lambda^0_\pm$-\con-\tot) into the following more explicit form
\begin{axioms}
\item[($\lambda^0_+$-\con-\tot)] $\alpha\in \sCON$, $\beta\in \upset \totWV$ and $\alpha_+ = \beta_+ \implies \alpha_- \leq \beta_-$
\item[($\lambda^0_-$-\con-\tot)] $\alpha\in \sCON$, $\beta\in \upset \totWV$ and $\alpha_- = \beta_- \implies \alpha_+ \leq \beta_+$
\end{axioms}

Our aim now is to restrict $\alpha$ and $\beta$ to smaller and smaller sets. First, we restate the axioms such that the $\beta$'s come from $\totWV$ and then from $\totW$ (resp.\ $\totV$). Then, we do the same with $\alpha$ until we obtain a version of the (\con-\tot) axiom stated purely in terms of formulas involving only elements from $\conWbV$ (resp.\ $\conVbW$) and $\totW$ (resp.\ $\totV$). The individual stages are depicted in the diagram below (the $\lambda$ superscripts in the axiom name correspond to the stages as shown in the diagram):

\begin{diagram}
    \sCON & & \upset \totWV \ar[rel,solid]{ll}{0th} \\
    \conWVbV/\conWVbW\ar[hook]{u}  & & \totWV\ar[hook]{u}\ar[rel]{llu}{1st} \\
    \conWbV/\conVbW\ar[hook]{u}  & & \totW/\totV\ar[hook]{u}\ar[rel]{lluu}{2nd}\ar[rel,densely dotted]{llu}{3rd}\ar[rel,densely dotted]{ll}{4th}
\end{diagram}

In every stage we introduce a pair of axioms (named ($\lambda^i_\pm$-\con-\tot), for $i=1,\dots,4$) and show that they imply the previous axioms. Because the axioms ($\lambda^i_+$-\con-\tot) and ($\lambda^i_-$-\con-\tot) are dual to each other, we will always only prove that, say, ($\lambda^i_+$-\con-\tot) implies ($\lambda^{i-1}_+$-\con-\tot) and leave out that ($\lambda^i_-$-\con-\tot) implies ($\lambda^{i-1}_-$-\con-\tot) as it is proved dually.

\begin{remark}
    Above we use a notation similar to the one introduced earlier. The relation $\conV$ is the algebraic closure of $\conO$ under all \emph{finite} logical joins ($\vee$) in $L_+\ttimes L_-$, and $\conW$, $\totV$, $\totW$, $\totWV$ and $\conWV$ are defined correspondingly. Likewise, \conWbV{} is the closure of $\conO$ under finite meets followed by the closure under all joins, both in logical order\footnote{This makes sense because, in any d-frame \dL, $\{ (\bigvee_i \alpha^{i}_+,\ \bigwedge_i \alpha^i_-) : \{ \alpha^{i} \}_{i} \subseteq \con \} \subseteq \con$. Indeed, from (\con--$\downset$), all $(\alpha^{i}_+,\ \bigwedge_i \alpha^i_-)\in \con$ and, by $\vee$ and $\dirsqcup$-closedness, $(\bigvee _i\alpha^{i}_+,\ \bigwedge_i \alpha^i_-)\in \con$. }, i.e.\
\[ \conWbV = \{ (\bigvee_i \alpha^{i}_+,\ \bigwedge_i \alpha^i_-) : \{ \alpha^{i} \}_{i} \subseteq \conW \}. \]
The other versions, such as \conVbW, \conWVbV{} and \conWVbW, are defined correspondingly.
\end{remark}

\paragraph*{1st stage.}
We intend to simplify the elements in the \tot{} relation. Consider the following axioms:
\begin{axioms}
    \item[($\lambda^1_+$-\con-\tot)] $\alpha\in \sCON,\ \higlightChange{\beta \in \totWV},\ \beta_+\leq \alpha_+ \implies \alpha_-\leq \beta_-$
    \item[($\lambda^1_-$-\con-\tot)] $\alpha\in \sCON,\ \higlightChange{\beta \in \totWV},\ \beta_-\leq \alpha_- \implies \alpha_+\leq  \beta_+$
\end{axioms}
Now, let $\alpha \in \sCON$ and let $\beta \in \TOT$ with $\alpha_+=\beta_+$. That means that there is some $\beta'\in \totWV$ such that $\beta'\sqsubseteq \beta$. We have that $\beta'_+ \leq \alpha_+$ and so we can now apply ($\lambda^1_+$-\con-\tot) and get that $\alpha_- \leq \beta'_-$ and so $\alpha_-\leq \beta'_- \leq \beta_-$. To sum up, we have proved the first part of:
\begin{lemma}
($\lambda^1_\pm$-\con-\tot) implies ($\lambda^0_\pm$-\con-\tot), and vice versa.
\end{lemma}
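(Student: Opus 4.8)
The implication ($\lambda^1_\pm$-\con-\tot) $\Rightarrow$ ($\lambda^0_\pm$-\con-\tot) is precisely the computation in the paragraph just above the statement, so my plan is to supply only the converse and, exploiting the $\pm$-symmetry, to prove just that ($\lambda^0_+$-\con-\tot) implies ($\lambda^1_+$-\con-\tot).

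On the plus side the only gap between the two axioms is that ($\lambda^1_+$-\con-\tot) hypothesises the inequality $\beta_+\leq\alpha_+$ whereas ($\lambda^0_+$-\con-\tot) insists on the equality $\alpha_+=\beta_+$ (while, correspondingly, letting $\beta$ range over all of $\upset\totWV=\TOT$ rather than only $\totWV$). My approach is to convert the inequality into an equality by inflating $\beta$ along its plus-component: given $\alpha\in\sCON$, $\beta\in\totWV$ with $\beta_+\leq\alpha_+$, I would set $\beta'\DEFEQ(\alpha_+,\beta_-)$, which agrees with $\alpha$ in the plus slot and copies the minus slot of $\beta$.

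The verification is then short. Since $\beta_+\leq\alpha_+$ and the minus-components coincide, $\beta\sqsubseteq\beta'$, and as $\beta\in\totWV$ this places $\beta'\in\upset\totWV$. Now $\beta'_+=\alpha_+$, so ($\lambda^0_+$-\con-\tot) applies to the pair $\alpha,\beta'$ and yields $\alpha_-\leq\beta'_-=\beta_-$, which is exactly the conclusion of ($\lambda^1_+$-\con-\tot); the minus variant follows from the dual inflation $\beta'=(\beta_+,\alpha_-)$. I do not anticipate a real obstacle: the argument rests only on the fact that moving upward in the $\sqsubseteq$-order keeps an element inside $\TOT=\upset\totWV$, so relaxing equality to $\leq$ on one component is harmless. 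The one point to check carefully is that the inflation touches only the plus-component—so that the minus-inequality transfers verbatim back to $\beta$—and that $\beta'$ indeed remains in $\upset\totWV$, both of which are immediate from $\beta\sqsubseteq\beta'$.
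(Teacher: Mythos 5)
Your proposal is correct and matches the paper's own argument: the forward direction is indeed the computation preceding the lemma, and your inflation $\beta'=(\alpha_+,\beta_-)$ is precisely the pair the paper uses, observing it lies in $\upset\totWV$ so that ($\lambda^0_+$-\con-\tot) applies and gives $\alpha_-\leq\beta'_-=\beta_-$.
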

For the converse assume $\beta_+\leq \alpha_+$. Then the pair $(\alpha_+,\beta_-)$ belongs to $\upset \totWV$ and by ($\lambda^0_\pm$-\con-\tot) we can conclude $\alpha_-\leq\beta_-$.

\paragraph*{2nd stage.}
We can simplify the elements in \tot{} even further. Take the axioms:
\begin{axioms}
    \item[($\lambda^2_+$-\con-\tot)] $\alpha\in \sCON,\ \higlightChange{\beta \in \totW},\ \beta_+\leq \alpha_+ \implies \alpha_-\leq \beta_-$
    \item[($\lambda^2_-$-\con-\tot)] $\alpha\in \sCON,\ \higlightChange{\beta \in \totV},\ \beta_-\leq \alpha_- \implies \alpha_+\leq  \beta_+$
\end{axioms}
Let $\alpha \in \sCON$ and let $\beta \in \totWV$ with $\alpha_+\leq\beta_+$. We can decompose $\beta$ such that $\beta = \bigvee_{k=1}^n \beta^k$ where $\beta^k \in \totW$, for every $k=1,\dots,n$. Then, for every $k$, we have that $\beta^k_+ \leq \beta_+ \leq \alpha$ and so $\alpha_-\leq \beta^k_-$. Because $\alpha_-\leq \beta^k_-$ for every $k$, also $\alpha_- \leq \beta_- = \bigwedge_{k=1}^n \beta^k_-$.

\begin{lemma}
($\lambda^2_\pm$-\con-\tot) implies ($\lambda^1_\pm$-\con-\tot), and vice versa.
\end{lemma}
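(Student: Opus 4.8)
The plan is to split the biconditional into its two halves, which are of completely different character. The ``vice versa'' half, that ($\lambda^1_\pm$-\con-\tot) implies ($\lambda^2_\pm$-\con-\tot), I would dispatch immediately: since $\totW \subseteq \totWV$ (and dually $\totV \subseteq \totWV$), the axiom ($\lambda^2_+$-\con-\tot) is nothing but the restriction of ($\lambda^1_+$-\con-\tot) to those $\beta$ that already lie in $\totW$. Thus, assuming ($\lambda^1_+$-\con-\tot) for all $\beta\in\totWV$, it holds in particular for all $\beta\in\totW$, which is exactly ($\lambda^2_+$-\con-\tot). No work is needed here.

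The substance is in the forward direction ($\lambda^2_\pm$-\con-\tot) $\Rightarrow$ ($\lambda^1_\pm$-\con-\tot), which I would prove on the $+$ side (the $-$ side being dual). Fix $\alpha\in\sCON$ and $\beta\in\totWV$ with $\beta_+\leq\alpha_+$; the goal is $\alpha_-\leq\beta_-$. The key structural step is a disjunctive-normal-form decomposition: every element of $\totWV$ is a finite \emph{logical} join of elements of $\totW$. I would justify this purely by distributivity of the logical lattice $(L_+\ttimes L_-;\ \leq)$. Indeed, the set $\{\, \bigvee_k \gamma^k : \gamma^k\in\totW \,\}$ contains $\totW$, is trivially closed under logical joins, and is also closed under logical meets because $(\bigvee_k\gamma^k)\wedge(\bigvee_l\delta^l)=\bigvee_{k,l}(\gamma^k\wedge\delta^l)$ with each $\gamma^k\wedge\delta^l\in\totW$ (as $\totW$ is $\wedge$-closed by definition); hence it equals $\totWV$. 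So I may write $\beta=\bigvee_{k=1}^n\beta^k$ with each $\beta^k\in\totW$.

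Once the decomposition is in place, the remainder is bookkeeping with the pointwise formulas for the logical connectives. The plus component of a logical join is the ordinary join in $L_+$, so $\beta^k_+\leq\beta_+\leq\alpha_+$ for every $k$; applying ($\lambda^2_+$-\con-\tot) to the pair $(\alpha,\beta^k)$ yields $\alpha_-\leq\beta^k_-$. The minus component of a logical join is a meet in $L_-$, so $\beta_-=\bigwedge_{k=1}^n\beta^k_-$, and since $\alpha_-\leq\beta^k_-$ for all $k$ we conclude $\alpha_-\leq\beta_-$, as required. The $-$ side runs dually, this time decomposing $\beta\in\totWV$ into a finite logical meet of elements of $\totV$ and using that the minus component of a logical meet is a join while the plus component is a meet. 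The only step that is not sheer calculation is the normal-form decomposition, and I expect that to be the main obstacle — though it is a routine consequence of distributivity and is entirely independent of the (\con-\tot) machinery, so I do not anticipate any real difficulty.
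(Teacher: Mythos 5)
Your proof is correct and takes essentially the same approach as the paper's: decompose $\beta\in\totWV$ as a finite logical join $\bigvee_{k=1}^n\beta^k$ of elements of $\totW$, apply ($\lambda^2_+$-\con-\tot) to each pair $(\alpha,\beta^k)$, and conclude via $\beta_-=\bigwedge_{k=1}^n\beta^k_-$, with the converse being the trivial restriction along $\totW\subseteq\totWV$. The only difference is that you explicitly justify the normal-form decomposition by distributivity of the logical lattice, a step the paper asserts without comment.
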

Here the converse direction is trivial.

\paragraph*{3rd stage.} Now we focus on the complexity of elements~$\alpha$ from \con{}. To eliminate $\p D(-)$ consider the following auxiliary axioms:
\begin{axioms}
\item[($\alpha_+$-\con-\tot)] $\higlightChange{\{ (x^k, y) \}_k \subseteq \downset \conWV},\ \beta\in \totW,\ \beta_+ \leq \bigvee_k x^k \implies y\leq \beta_-$

\item[($\alpha_-$-\con-\tot)] $\higlightChange{\{ (x, y^k) \}_k \subseteq \downset \conWV},\ \beta\in \totV,\ \beta_- \leq \bigvee_k y^k \implies x\leq \beta_+$
\end{axioms}
Let $\alpha\in \sCON$. By Lemma~\ref{l:d-equals-db}, this means that there exist $A_\pm\subseteq B_\pm$ which are ($\downset\conWV$)-independent and such that $\alpha = (\bigvee A_+, \bigvee A_-)$. Let us fix a $b_- \in A_-$. The ($\downset\conWV$)-independence of $A_+$ and $A_-$ means that $A_+ \ttimes \{b\} \subseteq \downset \conWV$. Because also $\beta_+ \leq \alpha_+ = \bigvee A_+$, we can apply ($\alpha_+$-\con-\tot) and obtain that $b_- \leq \beta_-$. Since $b_-\in A_-$ was chosen arbitrarily, $\alpha_- = \bigvee A_- \leq \beta_-$. We have proved that ($\alpha_+$-\con-\tot) implies ($\lambda^2_+$-\con-\tot).

Finally, we can get rid of the downwards closure of \conWV{}. Consider the following axioms:

\begin{axioms}
    \item[($\lambda^3_+$-\con-\tot)] $\higlightChange{\alpha\in \conWVbV},\ \beta \in \totW$, $\beta_+ \leq \alpha_+ \implies \alpha_- \leq \beta_-$

    \item[($\lambda^3_-$-\con-\tot)] $\higlightChange{\alpha\in \conWVbW},\ \beta \in \totV$, $\beta_- \leq \alpha_- \implies \alpha_+ \leq \beta_+$
\end{axioms}

Let $\{ (x^k, y) \}_k \subseteq \downset \conWV$ be such that $\beta_+ \leq \bigvee_k x^k$. For every $k$, there exists an $\alpha^k\in \conWV$ such that $(x^k, y)\sqsubseteq \alpha^k$. Clearly, $\beta_+ \leq \bigvee_k x^k \leq \bigvee_k \alpha^k_+$, and $\alpha = (\bigvee_k \alpha^k_+, \bigwedge_k \alpha^k_-)\in \conWVbV$. We can apply ($\lambda^3_+$-\con-\tot) and obtain that $y \leq \alpha_- \leq \beta_-$. Together with the previous result we have that:

\begin{lemma}
($\lambda^3_\pm$-\con-\tot) implies ($\lambda^2_\pm$-\con-\tot).
\end{lemma}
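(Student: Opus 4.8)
The plan is to observe that the final lemma needs essentially no fresh argument on top of the two paragraphs preceding its statement. Those paragraphs have already done all the work for the plus side: the first shows that the auxiliary axiom ($\alpha_+$-\con-\tot) implies ($\lambda^2_+$-\con-\tot), and the second shows that ($\lambda^3_+$-\con-\tot) implies ($\alpha_+$-\con-\tot). Composing these two implications gives ($\lambda^3_+$-\con-\tot) $\Rightarrow$ ($\lambda^2_+$-\con-\tot) directly, and the minus version follows by the duality convention announced in the preliminaries (interchanging $L_+$ and $L_-$, swapping $\wedge/\vee$ and $\bigvee/\bigwedge$, and reversing the roles of the two coordinates). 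So my primary proof is just a one-line appeal to transitivity of the two established reductions, together with the symmetry remark.

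If instead one wants a self-contained proof that never mentions ($\alpha_+$-\con-\tot), I would concatenate the two constructions. Starting from $\alpha\in\sCON$, $\beta\in\totW$ with $\beta_+\leq\alpha_+$, I would use Lemma~\ref{l:d-equals-db} to write $\alpha=(\bigvee A_+,\bigvee A_-)$ for some $(\downset\conWV)$-independent sets $A_\pm\subseteq B_\pm$. Fixing $b_-\in A_-$, independence gives $A_+\ttimes\{b_-\}\subseteq\downset\conWV$, so for each $a\in A_+$ I can choose $\gamma^a\in\conWV$ with $(a,b_-)\sqsubseteq\gamma^a$; then $\alpha'\DEFEQ(\bigvee_{a\in A_+}\gamma^a_+,\ \bigwedge_{a\in A_+}\gamma^a_-)$ lies in $\conWVbV$ and satisfies $\beta_+\leq\alpha_+=\bigvee A_+\leq\alpha'_+$. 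Applying ($\lambda^3_+$-\con-\tot) to $\alpha'$ and $\beta$ yields $\alpha'_-\leq\beta_-$, and since $b_-\leq\gamma^a_-$ for every $a$ we get $b_-\leq\bigwedge_a\gamma^a_-=\alpha'_-\leq\beta_-$. As $b_-\in A_-$ was arbitrary, $\alpha_-=\bigvee A_-\leq\beta_-$, which is exactly ($\lambda^2_+$-\con-\tot).

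There is no genuine obstacle in this concluding lemma itself; the conceptual content sits entirely in the two prior reductions. If forced to name the delicate point, it is the appeal to Lemma~\ref{l:d-equals-db} (equivalently, to the hypothesis ($\downset\conWV$-ind$_\pm$)), since this is what lets me trade the opaque directed-join description of \sCON{} for a join of $R$-independent generators and thereby transfer the hypothesis onto a single element of \conWVbV. The passage from \conWV{} to \conWVbV{} also quietly relies on the footnote observation that closing under joins of plus-parts and meets of minus-parts keeps us inside the consistency relation, so that $\alpha'$ is a legitimate witness for ($\lambda^3_+$-\con-\tot).
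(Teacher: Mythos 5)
Your proof is correct and is exactly the paper's argument: the paper states this lemma precisely as the composition of the two reductions ($\lambda^3_+$-\con-\tot) $\Rightarrow$ ($\alpha_+$-\con-\tot) $\Rightarrow$ ($\lambda^2_+$-\con-\tot) established in the two preceding paragraphs, with the minus side handled by the duality convention, and your self-contained variant merely inlines the auxiliary axiom ($\alpha_+$-\con-\tot). One inaccuracy in your closing commentary: Lemma~\ref{l:d-equals-db} is not ``equivalent'' to ($\downset\conWV$-ind$_\pm$) --- it holds for any downwards-closed, $(\wedge,\vee)$-closed $R$ with no independence hypothesis --- whereas ($\downset\conWV$-ind$_\pm$) is what forces one application of $\p D(-)$ to suffice (Theorem~\ref{t:dd-equals-d}), i.e.\ $\CON = \sCON$, which is needed so that the $\lambda$-axioms can be phrased in terms of \sCON{} in the first place.
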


\paragraph*{4th stage.} The final simplification is similar to the 2nd stage but this time acts on the $\con$ side:
\begin{axioms}
    \item[($\lambda^4_+$-\con-\tot)] $\higlightChange{\alpha\in \conWbV},\ \beta \in \totW$, $\beta_+ \leq \alpha_+ \implies \alpha_- \leq \beta_-$

    \item[($\lambda^4_-$-\con-\tot)] $\higlightChange{\alpha\in \conVbW},\ \beta \in \totV$, $\beta_- \leq \alpha_- \implies \alpha_+ \leq \beta_+$
\end{axioms}
Distributivity of $\wedge$ and $\vee$ gives us that
\begin{align*}
    \conWVbV = \conWbV \qtq{and} \conWVbW = \conVbW
\end{align*}
from which we can conclude:
\begin{lemma}
    ($\lambda^4_\pm$-\con-\tot) implies ($\lambda^3_\pm$-\con-\tot), and vice versa.
\end{lemma}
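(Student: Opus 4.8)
The plan is to reduce the lemma to the two set-level identities $\conWVbV = \conWbV$ and $\conWVbW = \conVbW$ announced just above it. Once these are in hand, the axioms ($\lambda^3_+$-\con-\tot) and ($\lambda^4_+$-\con-\tot) range the variable $\alpha$ over literally the same subset of $L_+ \ttimes L_-$, while the hypotheses on $\beta$ and the conclusions are word-for-word identical; thus the two axioms are the very same statement, so each trivially implies the other, and dually on the minus side. All the actual work therefore lies in the set equalities.

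I would prove $\conWVbV = \conWbV$ by two inclusions. The inclusion $\conWbV \subseteq \conWVbV$ is immediate from $\conW \subseteq \conWV$: closing the smaller relation under arbitrary logical joins can only produce a subset of what one obtains from the larger one. For the reverse inclusion the key observation is a distributive normal form: since $(L_+ \ttimes L_-;\, \wedge, \vee)$ is a distributive lattice, the sublattice $\conWV$ generated by $\conO$ under all \emph{finite} logical meets and joins consists exactly of the finite logical joins of elements of $\conW$. Indeed $\conW$ is the finite-meet closure of $\conO$; the set of finite joins of elements of $\conW$ already contains $\conO$, is visibly closed under $\vee$, and is closed under $\wedge$ by distributivity (a meet of two such joins redistributes into a join of meets, each meet again lying in $\conW$), so it is the whole sublattice.

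With this normal form, an arbitrary element of $\conWVbV$ is an arbitrary logical join of elements of $\conWV$, each of which is in turn a finite logical join of elements of $\conW$; an arbitrary join of finite joins is again an arbitrary join, so the element already lies in $\conWbV$. This yields $\conWVbV \subseteq \conWbV$, hence equality. The identity $\conWVbW = \conVbW$ is proved dually, replacing $\conW$ by $\conV$ and exchanging the roles of $\wedge$ and $\vee$ between the two coordinates.

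The only place demanding any care --- and the main, if modest, obstacle --- is this normal-form step, namely verifying that the finite-meet-then-finite-join closure really exhausts the sublattice generated by $\conO$. This is the standard fact that a distributive lattice admits disjunctive normal forms, and it is legitimate here because the preliminaries record that the logical structure on $L_+ \ttimes L_-$ (and in particular $(\con;\, \wedge, \vee, \dtt, \dff)$) is a distributive lattice.
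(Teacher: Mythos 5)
Your proof is correct and takes essentially the same route as the paper: the paper also derives the lemma directly from the identities $\conWVbV = \conWbV$ and $\conWVbW = \conVbW$, which it attributes in one line to distributivity of $\wedge$ and $\vee$, whereupon the two axioms coincide verbatim. Your only addition is spelling out the disjunctive-normal-form argument behind those identities, which the paper leaves implicit.
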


Furthermore, ($\lambda^4_\pm$-\con-\tot) and ($\lambda^1_\pm$-\con-\tot) are equivalent because
\begin{align}
    \conWbV \subseteq \sCON \qtq{and} \conVbW \subseteq \sCON. \label{e:con-subsets}
\end{align}
To prove these inclusions, let $\alpha = (\bigvee_k \alpha^k_+, \bigwedge_k \alpha^k_-)$ where $\{ \alpha^k \}_{k\in K} \subseteq \conW$. Then, for every $k\in K$, $(\alpha^k_+, \alpha_-)\sqsubseteq \alpha^k$ and so $(\alpha^k_+, \alpha_-) \in \downset \conW \subseteq \downset \conWV$. Because $\downset \conWV$ is $\vee$-closed, $\{ (\bigvee_{k\in F}\, \alpha^k_+, \alpha_-) : F\finsubseteq K \}$ is directed in $\downset \conWV$ and so $\alpha \in \sCON$.

\medskip
We can apply similar techniques to simplify ($R$-ind):

\begin{lemma}\label{l:indep-split}
    ($\downset\conWV$-ind) is equivalent to having the following two conditions
\begin{axioms}
    \item[($\downset\conWV$-ind$_+$)] $(B_+\ttimes B_-) \cap \downset \conWbV \subseteq \downset \conWV$
    \item[($\downset\conWV$-ind$_-$)] $(B_+\ttimes B_-) \cap \downset \conVbW \subseteq \downset \conWV$
\end{axioms}
\end{lemma}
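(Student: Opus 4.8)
The plan is to prove the two implications separately, using the compact reformulation recorded in the preceding remark throughout: $(\downset\conWV\text{-ind})$ asserts precisely that $(B_+\ttimes B_-)\cap\sCON\subseteq\downset\conWV$. The forward implication is the easy one. By (\ref{e:con-subsets}) we already know $\conWbV\subseteq\sCON$ and $\conVbW\subseteq\sCON$, and since $\sCON=\p D(\downset\conWV)$ is downwards closed (Lemma~\ref{l:closure-props}) these upgrade to $\downset\conWbV\subseteq\sCON$ and $\downset\conVbW\subseteq\sCON$. Intersecting with $B_+\ttimes B_-$ and feeding the result into $(\downset\conWV\text{-ind})$ then delivers $(\downset\conWV\text{-ind}_+)$ and $(\downset\conWV\text{-ind}_-)$ simultaneously.

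For the converse I would assume both split conditions and start from an arbitrary $(b_+,b_-)\in(B_+\ttimes B_-)\cap\sCON$. Lemma~\ref{l:d-equals-db} supplies $(\downset\conWV)$-independent sets $A_+\subseteq B_+$ and $A_-\subseteq B_-$ with $b_+=\bigvee A_+$ and $b_-=\bigvee A_-$; crucially $b_+\in B_+$ and $b_-\in B_-$ are handed to us by the hypothesis, even though the joins themselves are taken in $L_\pm$. The idea is to reach $\downset\conWV$ by discharging the two coordinates one after the other, invoking $(\downset\conWV\text{-ind}_+)$ first and $(\downset\conWV\text{-ind}_-)$ second.

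Concretely, I first fix $a_-\in A_-$. Independence gives, for each $a_+\in A_+$, some $\gamma^{a_+}\in\conWV$ with $(a_+,a_-)\sqsubseteq\gamma^{a_+}$, and the arbitrary logical join $\gamma\DEFEQ(\bigvee_{a_+}\gamma^{a_+}_+,\ \bigwedge_{a_+}\gamma^{a_+}_-)$ lies in $\conWVbV=\conWbV$ and sits above $(b_+,a_-)$, since $b_+=\bigvee A_+\leq\gamma_+$ and $a_-\leq\gamma_-$. Thus $(b_+,a_-)\in(B_+\ttimes B_-)\cap\downset\conWbV$, and $(\downset\conWV\text{-ind}_+)$ places it in $\downset\conWV$. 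Ranging now over $a_-\in A_-$, I pick witnesses $\eta^{a_-}\in\conWV$ above each $(b_+,a_-)$ and take their arbitrary logical meet $\eta\DEFEQ(\bigwedge_{a_-}\eta^{a_-}_+,\ \bigvee_{a_-}\eta^{a_-}_-)\in\conWVbW=\conVbW$, which sits above $(b_+,\bigvee A_-)=(b_+,b_-)$; one last appeal to $(\downset\conWV\text{-ind}_-)$ lands $(b_+,b_-)$ in $\downset\conWV$, as required.

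The one step that genuinely needs the split hypotheses — and the part I would be most careful about — is the passage to these infinite joins and meets. Each witness only lies in $\conWV$, but its infinite logical join (resp.\ meet) is tamed precisely because closing $\conWV$ under all logical joins (resp.\ meets) produces exactly $\conWbV$ (resp.\ $\conVbW$), which is the very relation that $(\downset\conWV\text{-ind}_+)$ (resp.\ $(\downset\conWV\text{-ind}_-)$) is designed to control. Everything else reduces to routine monotonicity bookkeeping of the form $a_+\leq\gamma^{a_+}_+$ and $a_-\leq\gamma^{a_+}_-$.
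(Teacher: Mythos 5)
Your proof is correct and follows essentially the same route as the paper's: the forward direction via (\ref{e:con-subsets}) and downward closedness of \sCON{}, and the converse via Lemma~\ref{l:d-equals-db}, first forming an infinite logical join of witnesses in $\conWVbV=\conWbV$ to apply ($\downset\conWV$-ind$_+$) coordinate-wise, then an infinite logical meet in $\conWVbW=\conVbW$ to apply ($\downset\conWV$-ind$_-$). No gaps; this matches the paper's argument step for step.
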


We sum up all the previous results into this theorem:
\begin{theorem}\label{t:con-tot}
    If ($\lambda^4_\pm$-\con-\tot) and ($\downset\conWV$-ind$_\pm$) hold for a pre-d-frame presentation, then the generated pre-d-frame satisfies (\con-\tot).
\end{theorem}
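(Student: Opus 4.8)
The plan is to assemble the reductions established across Sections~\ref{s:one-step} and~\ref{s:chasing-contot}; the theorem is in essence the bookkeeping statement that collects them into a single implication. First I would observe that (\con-\tot) for the generated pre-d-frame decomposes into its two halves ($\lambda^0_+$-\con-\tot) and ($\lambda^0_-$-\con-\tot), exactly as the axiom was split at the start of Section~\ref{s:chasing-contot}. Indeed, $\alpha\sqsubseteq\beta$ abbreviates ``$\alpha_+\leq\beta_+$ and $\alpha_-\leq\beta_-$'', so under the hypothesis $\alpha_+=\beta_+$ the first inequality is automatic and only $\alpha_-\leq\beta_-$ remains, which is precisely the conclusion of ($\lambda^0_+$-\con-\tot); the disjunct $\alpha_-=\beta_-$ is handled dually by ($\lambda^0_-$-\con-\tot). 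Hence it suffices to prove ($\lambda^0_\pm$-\con-\tot) for $\CON$ and $\TOT$.

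The decisive preparatory step is to collapse $\CON$ to the single relation $\sCON$. By Lemma~\ref{l:indep-split}, the two hypotheses ($\downset\conWV$-ind$_+$) and ($\downset\conWV$-ind$_-$) together amount to ($\downset\conWV$-ind), which is precisely the condition ($R$-ind) instantiated at $R=\downset\conWV$. Now $\conWV$ is $(\wedge,\vee)$-closed by construction, so $\downset\conWV$ is downwards closed and $(\wedge,\vee)$-closed by Lemma~\ref{l:closure-props}; thus all hypotheses of Theorem~\ref{t:dd-equals-d} are met and we obtain $\p D(\p D(\downset\conWV))=\p D(\downset\conWV)$. Consequently the transfinite iteration $\CON=\bigcup_\iota\p D^\iota(\downset\conWV)$ stabilises already at the first step, giving $\CON=\p D(\downset\conWV)=\sCON$. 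This is the step I regard as carrying the real weight: it replaces the opaque ordinal-indexed union by the concrete relation $\sCON$ appearing in the explicit restatements of ($\lambda^0_\pm$-\con-\tot), so that from here on only finitely generated elements are in play.

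With $\CON=\sCON$ (and $\TOT=\upset\totWV$ as always), I would descend the ladder of implications of Section~\ref{s:chasing-contot}. Assuming ($\lambda^4_\pm$-\con-\tot), the 4th-stage lemma yields ($\lambda^3_\pm$-\con-\tot); the 3rd-stage lemma then yields ($\lambda^2_\pm$-\con-\tot); the 2nd-stage lemma turns this into ($\lambda^1_\pm$-\con-\tot); and the 1st-stage lemma finally delivers ($\lambda^0_\pm$-\con-\tot) in its explicit form. (One may also bypass the intermediate stages and invoke the direct equivalence of ($\lambda^4_\pm$-\con-\tot) and ($\lambda^1_\pm$-\con-\tot) afforded by the inclusions~(\ref{e:con-subsets}).) Combining the two resulting halves as in the first paragraph produces (\con-\tot), as required.

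I expect no real difficulty beyond the collapse step of the second paragraph. A naive attempt would try to check (\con-\tot) directly against $\CON=\bigcup_\iota\p D^\iota(\downset\conWV)$, whose higher-stage elements admit no workable description; the role of ($\downset\conWV$-ind$_\pm$) is precisely to eliminate this iteration. Once that is done, the remainder is the mechanical chaining of lemmas that have already whittled the quantified sets down from $\sCON$ and $\upset\totWV$ to $\conWbV$ (resp.\ $\conVbW$) and $\totW$ (resp.\ $\totV$).
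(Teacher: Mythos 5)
Your proof is correct and is essentially the paper's own argument: the theorem is stated in the paper precisely as the aggregation of the results of Sections~\ref{s:one-step} and~\ref{s:chasing-contot}, namely the split of (\con-\tot) into ($\lambda^0_\pm$-\con-\tot), the collapse $\CON = \sCON$ obtained from Lemma~\ref{l:indep-split} together with Theorem~\ref{t:dd-equals-d} (your identification of this as the load-bearing step, and your check that $\downset\conWV$ meets the hypotheses of that theorem via Lemma~\ref{l:closure-props}, match the paper), and the chain ($\lambda^4_\pm$-\con-\tot) $\Rightarrow$ ($\lambda^3_\pm$-\con-\tot) $\Rightarrow$ ($\lambda^2_\pm$-\con-\tot) $\Rightarrow$ ($\lambda^1_\pm$-\con-\tot) $\Rightarrow$ ($\lambda^0_\pm$-\con-\tot) of the stage lemmas. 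One caveat: your parenthetical shortcut should be dropped, because the inclusions~(\ref{e:con-subsets}) by themselves only yield the quantifier-restriction direction ($\lambda^1_\pm$-\con-\tot) $\Rightarrow$ ($\lambda^4_\pm$-\con-\tot), whereas the direction you need, ($\lambda^4_\pm$-\con-\tot) $\Rightarrow$ ($\lambda^1_\pm$-\con-\tot), is exactly what the intermediate stage lemmas establish, so invoking that ``equivalence'' to bypass the stages would be circular.
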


\begin{remark}
    It is not possible to check if (\con-\tot) holds in the generated pre-d-frame just by looking at its syntactic presentation. However, our sufficient conditions are much simpler than the formulas involving infinitary applications of $\p D(-)$. Nevertheless we still need to understand the structure of the generated frame components.
\end{remark}

\subsection{A special case}
In our applications even stronger and simpler conditions hold for the presentations. Namely, consider the following \emph{``micro version''} of (\con-\tot):
\begin{axioms}
    \item[($\mu_+$-\con-\tot)] $\alpha \in \conV$,   $\beta\in \totW$. $\beta_+ \leq \alpha_+ \implies \alpha_- \leq \beta_-$
    \item[($\mu_-$-\con-\tot)] $\alpha \in \conW$, $\beta\in \totV$.   $\beta_- \leq \alpha_- \implies \alpha_+ \leq \beta_+$
\end{axioms}
and the following (more powerful) version of conditions ($\downset\conVW$-ind$_\pm$):
\begin{axioms}
    \item[(Indep$_+$)] $(L_+\ttimes B_-) \cap \downset \conWbV \subseteq \downset \conV$
    \item[(Indep$_-$)] $(B_+\ttimes L_-) \cap \downset \conVbW \subseteq \downset \conW$
\end{axioms}

\begin{proposition}\label{p:simple-axioms}
    If ($\mu_\pm$-\con-\tot) and (Indep$_\pm$) hold for a pre-d-frame presentation, then the generated pre-d-frame satisfies (\con-\tot).
\end{proposition}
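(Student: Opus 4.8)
The plan is to derive the proposition from Theorem~\ref{t:con-tot} by checking that the two stronger hypotheses imply, respectively, the two weaker ones required there: namely that (Indep$_\pm$) implies ($\downset\conWV$-ind$_\pm$), and that (Indep$_\pm$) together with ($\mu_\pm$-\con-\tot) implies ($\lambda^4_\pm$-\con-\tot). Once both implications are in place, Theorem~\ref{t:con-tot} immediately yields (\con-\tot) for the generated pre-d-frame. As usual I treat only the plus-indexed variants, the minus ones being dual.

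The first implication is immediate. Since $B_+\subseteq L_+$ we have $(B_+\ttimes B_-) \subseteq (L_+\ttimes B_-)$, so every pair in $(B_+\ttimes B_-)\cap \downset\conWbV$ already lies in $(L_+\ttimes B_-)\cap \downset\conWbV$ and hence, by (Indep$_+$), in $\downset\conV$. Finally $\conV\subseteq \conWV$ (the former closes $\conO$ under logical joins only, the latter under both joins and meets), so $\downset\conV \subseteq \downset\conWV$, which is exactly ($\downset\conWV$-ind$_+$).

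For the second implication, fix $\alpha\in \conWbV$ and $\beta\in \totW$ with $\beta_+\leq\alpha_+$; I must show $\alpha_-\leq\beta_-$. Since $L_-$ is generated by $B_-$, we have $\alpha_- = \bigvee \p B_-(\alpha_-)$, so it suffices to prove $b_-\leq\beta_-$ for each generator $b_-\in \p B_-(\alpha_-) = \downset\alpha_-\cap B_-$. Fix such a $b_-$. Then $(\alpha_+, b_-)\sqsubseteq\alpha$, so $(\alpha_+, b_-)\in (L_+\ttimes B_-)\cap\downset\conWbV$, and (Indep$_+$) supplies some $\gamma\in\conV$ with $(\alpha_+, b_-)\sqsubseteq\gamma$, i.e.\ $\alpha_+\leq\gamma_+$ and $b_-\leq\gamma_-$. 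Now $\gamma\in\conV$, $\beta\in\totW$, and $\beta_+\leq\alpha_+\leq\gamma_+$, so ($\mu_+$-\con-\tot) gives $\gamma_-\leq\beta_-$, whence $b_-\leq\gamma_-\leq\beta_-$. As $b_-$ was arbitrary, $\alpha_-\leq\beta_-$, establishing ($\lambda^4_+$-\con-\tot).

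The only genuinely substantive step is this last reduction, and its crux is the use of (Indep$_+$) to replace the complicated $\conWbV$-witness by a single $\conV$-element $\gamma$ once we project onto a minus-generator $b_-$; this is precisely what lets the ``micro'' axiom ($\mu_+$-\con-\tot), which speaks only of $\conV$, do the work of the more elaborate ($\lambda^4_+$-\con-\tot). The passage to generators rests on nothing beyond the facts that $L_-$ is generated by $B_-$ and that $\downset\conWbV$ is a downset, both already available. I anticipate no real obstacle: the content of the proposition is simply that these stronger, generator-level conditions are typically easier to verify in concrete presentations than the hypotheses of Theorem~\ref{t:con-tot}.
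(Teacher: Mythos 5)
Your proof is correct and follows essentially the same route as the paper's own argument: both invoke Theorem~\ref{t:con-tot}, note that (Indep$_\pm$) strengthens ($\downset\conWV$-ind$_\pm$), and derive ($\lambda^4_+$-\con-\tot) by fixing a generator $b_-\leq\alpha_-$, applying (Indep$_+$) to $(\alpha_+,b_-)$ to obtain a witness $\gamma\in\conV$, and concluding via ($\mu_+$-\con-\tot) and the fact that $\alpha_-=\bigvee(\downset\alpha_-\cap B_-)$. The only difference is that you spell out the easy inclusion $\downset\conV\subseteq\downset\conWV$, which the paper dismisses as ``clearly''.
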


\section{Application: Coproducts}\label{s:coproducts}

\subsection{Coproducts of frames}\label{s:coprod-frm}
For a nice presentation of the coproducts of frames, we refer the reader to the book \emph{``Frames and Locales''}~\cite{picadopultr2011frames}. Here we only outline basic facts about the construction. Let $\{L^i\}_{i\in \p I}$ be a family of frames. The coproduct of $\{ L^i \}_i$ in the category of meet-semilattices is $\prod'_i L^i$ which is the subset of $\prod_i L^i$ consisting of those elements with all but finitely many coordinates equal to 1. Then, the coproduct of $\{L^i\}_i$ in the category of frames $\bigoplus_i L^i$ can be presented as the frame of $\p C$-ideals of $(\prod'_i L^i, \p C)$ with the set of coverings $\p C$ of the form:
\[ \{ a^k *_j u : k\in K \} \dashv (\bigvee_{k\in K} a^k) *_j u \]
where, for an $a\in L^j$ and $u\in \prod'_i L^i$, $a *_j u$ is the element of $\prod'_i L^i$ such that $(a *_j u)_j = a$ and $(a *_j u)_i = u_i$ for $i\not=j$. Recall also that the smallest element of $\bigoplus_i L^i$ is the \p C-ideal $\mathbf n = \{ u \in \prod'_i L^i ~|~ u_i = 0 \text{ for some } i\}$.

The inclusion maps are the frame homomorphisms $\iota^j\colon L^j \to \bigoplus_i L^i$, $x \mapsto \downset (x *_j \ones) \cup \mathbf n$, where $(\ones)_i = 1$ for all $i\in \p I$. We can factor $\iota^j$ into a composition of two meet-semilattice homomorphisms $\sem\circ \kappa^j$ where
\begin{align*}
\kappa^j\colon L^j &\to \prod\nolimits_i' L^i
&\qtq{and}
&&\sem\colon \prod\nolimits'_i L^i &\to \bigoplus_i L^i\\
x&\mapsto x *_j \ones
&&& u &\mapsto \downset u \cup \mathbf n
\end{align*}
Here $\kappa^j$ is the universal map for the semilattice coproduct $\prod\nolimits_i' L^i$ and $\sem$ is the inclusion $B \to \CIdl[]$ as in Lemma~\ref{l:frm-univ}, $B = \prod'_i L^i$, and $\downset u \cup \mathbf n$ is the smallest \p C-ideal containing~$u$.

\subsection{Coproducts of d-frames}\label{s:dfrm-coproducts}
Let $\{ \p L^i = (L_+^i, L_-^i; \con^i, \tot^i) \}_{i\in \p I}$ be a family of d-frames. We will define the coproduct of $\{ \p L^i\}_i$ by a~free \mbox{d-frame} construction. First, we compute the frame components of the coproduct of $\{ \p L^i \}_i$ as the coproducts of the frame components of the d-frames $\{ \p L^i\}_i$. Set $B_+ = \prod'_i L_+^i$ and $B_- = \prod'_i L_-^i$ and $\p C_+$ and $\p C_-$ independently as in Subsection~\ref{s:coprod-frm} (for $B_+$ and $B_-$, respectively). Namely, for every $j\in \p I$, we have a frame homomorphism
\[
\begin{tikzcd}
    \iota^j_\pm\colon L^j_\pm\ar{r}{\kappa^j_\pm} & \prod'_i L^i_\pm \ar{r}{\sem_\pm} & \bigoplus_i L^i_\pm
\end{tikzcd}
\]

In order for $\iota^j = (\iota^j_+, \iota^j_-)$ to be a \emph{d-frame} embedding into a coproduct, for every $(a,b)\in \con^j$ (resp.\ $\tot^j$), it has to be the case that $(\iota^j_+(a),\, \iota^j_-(b)) \in \CON$ (resp.\ $\TOT$). Also, the universal property of coproducts guarantees that for any d-frame cone $\{ \p L^i \to \p M \}_i$ there is a mediating d-frame homomorphism $\bigoplus_i \p L^i\to \p M$. This means that the relations we generate $\CON$ and $\TOT$ from should not contain anything more. Therefore, define $\conO, \totO\subseteq B_+\ttimes B_-$ by
\begin{align*}
    (a *_j \ones,\, b *_j \ones) \in \conO &\qtq{iff} (a,b)\in \con^j \\
    (a *_j \ones,\, b *_j \ones) \in \totO &\qtq{iff} (a,b)\in \tot^j
\end{align*}
and by $\bigoplus_i \p L^i$ denote the resulting pre-d-frame $(\OLxOL;\, \CON, \TOT)$.

\begin{notation}
For every $a\in L^i$ and $u\in B$, denote $a \oplus_i u = \sem[a *_i u]_\pm = \downset (a *_i u)\cup \mathbf n_\pm$. In particular, $a \oplus_i \ones = \downset (a *_i \ones) \cup \mathbf n_\pm$. As before, we identify $B_\pm$ with $\sem[B_\pm]_\pm\subseteq \bigoplus_i L^i_\pm$, $\conO$ with $\sem[\conO] \subseteq \OLxOL$, and $\totO$ with $\sem[\totO]$.
\end{notation}

To simplify our work by making sure that we can deal with indexes coherently, we prove the following lemma about normal forms of elements from \conV, \conW, \totV{} and \totW{}:

\begin{lemma}\label{p:contot-canon}
    Let $\alpha\in \conW/\totW$. Then, it is of the form $(\bigwedge_i \alpha^i_+, \bigvee_i \alpha^i_-)$ such that
\begin{enumerate}
    \item for every $i\in \p I$: $\alpha^i = (a_+\oplus_i \ones, a_-\oplus_i \ones)$ for some $(a_+, a_-)\in \con^i$ (resp. $\tot^i$), and
    \item there exists a \emph{finite} $I(\alpha) \finsubseteq \p I$ s.t.\ $i\in I(\alpha)$ iff $\alpha^i \not= \dtt$
\end{enumerate}

Similarly, every $\alpha\in \conV$ (resp.\ $\totW$) is of the form $(\bigvee_i \alpha^i_+, \bigwedge_i \alpha^i_-)$ where $\alpha^i\in \conO$ (resp. $\totO$) and $I(\alpha)$ denotes the finite set of indexes for which $\alpha^i\not= \dff$.
\end{lemma}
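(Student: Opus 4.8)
The plan is to prove the statement for $\conW$ in full and to obtain $\conV$, $\totW$ and $\totV$ by dualisation. By definition $\conW$ is the closure of $\conO$ under finite logical meets, so any $\alpha\in\conW$ is a finite logical meet $\alpha = \bigwedge_{k=1}^{n}\gamma_k$ of generators $\gamma_k\in\conO$ (the empty meet being $\dtt$, the top of the logical order). Each generator is supported on a single index: unwinding the definition of $\conO$ and the factorisation $\iota^{j}_\pm = \sem_\pm\circ\kappa^{j}_\pm$ from Section~\ref{s:coprod-frm}, we have $\gamma_k = (a_k\oplus_{j_k}\ones,\,b_k\oplus_{j_k}\ones) = (\iota^{j_k}_+(a_k),\,\iota^{j_k}_-(b_k))$ for some $j_k\in\p I$ and $(a_k,b_k)\in\con^{j_k}$.

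First I would regroup this meet by index. For each $i\in\p I$ set $K_i = \{\,k : j_k = i\,\}$ and $\alpha^i = \bigwedge_{k\in K_i}\gamma_k$, with $\alpha^i = \dtt$ when $K_i=\emptyset$. Since the logical order makes $L_+\ttimes L_-$ a distributive lattice, logical meet is associative and commutative, so $\alpha = \bigwedge_{i}\alpha^i$; reading off the two coordinates of a logical meet (meet on the plus side, join on the minus side) then gives precisely $\alpha = (\bigwedge_i\alpha^i_+,\,\bigvee_i\alpha^i_-)$, the shape required by the lemma.

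The real content is to check that each $\alpha^i$ is again a generator at index $i$, which is where the frame homomorphisms $\iota^i_\pm$ enter. Because $\iota^i_+$ preserves finite meets and $\iota^i_-$ preserves (in particular finite) joins, collapsing the same-index meet yields $\alpha^i = \bigl((\bigwedge_{k\in K_i}a_k)\oplus_i\ones,\ (\bigvee_{k\in K_i}b_k)\oplus_i\ones\bigr)$; moreover $(\bigwedge_{k\in K_i}a_k,\,\bigvee_{k\in K_i}b_k)$ is exactly the logical meet in $\p L^i$ of the pairs $(a_k,b_k)\in\con^i$ and hence lies in $\con^i$ since $\con^i$ is $(\wedge,\vee)$-closed. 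This proves clause~(1) (and shows $\alpha^i\in\conO$); for the empty groups note $\dtt = (1\oplus_i\ones, 0\oplus_i\ones)$ with $(1,0)\in\con^i$ by the d-frame axioms. For clause~(2), put $I(\alpha) = \{\,i : \alpha^i\neq\dtt\,\}$; as $\dtt$ is the logical top, an index $i$ contributes the neutral $1$ to the plus-meet and $0$ to the minus-join exactly when $\alpha^i = \dtt$, so the displayed meet and join are genuinely finite and $I(\alpha)\subseteq\{j_1,\dots,j_n\}$ is finite.

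The remaining cases are formal duals: for $\conV$ one writes $\alpha$ as a finite logical join of $\conO$-generators and swaps the roles of $\wedge$ and $\vee$ and of $\dtt$ and $\dff$ throughout, using $\vee$-closedness of $\con^i$; and $\totW$, $\totV$ are verbatim with $\tot^i$ replacing $\con^i$. I do not expect a serious obstacle here: everything is a reorganisation of a single finite logical meet, and the only nontrivial point---that logically combining several generators sitting on the \emph{same} index produces one more generator---rests squarely on $\iota^i_\pm$ being frame homomorphisms together with the lattice-closure of $\con^i$/$\tot^i$.
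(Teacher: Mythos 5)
Your proof is correct and follows essentially the same route as the paper's: decompose the element into finitely many $\conO$-generators, regroup them by index, collapse the same-index combinations using the fact that $\iota^i_\pm$ (equivalently $-\oplus_i\ones$) preserves finite meets and joins, and then invoke the $(\wedge,\vee)$-closedness of $\con^i$ together with $\dtt=(1\oplus_i\ones,0\oplus_i\ones)$ for the empty groups. The only cosmetic differences are that the paper argues the $\conV$ case (citing its Lemmas~\ref{l:oplus-wedge} and~\ref{l:oplus-vee}, whose collapsing content is exactly the homomorphism property you use via Lemma~\ref{p:copr-basics}) and obtains your $\conW$ case by duality, whereas you do the reverse.
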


Notice that 1.\ and 2.\ make sense together. Anytime $\alpha^i = \dtt$ we have that $\dtt = (\downset \ones \cup \np, \nm) = (1 \oplus_i \ones, 0 \oplus_i \ones)$ and $(1, 0) \in \con^i/\tot^i$. The case for $\alpha^i = \dff$ is similar.

\subsection{Strips, rectangles and crosses}

Before we get into proving that $\bigoplus_i \p L^i$ satisfies (\con-\tot) we look into the structure of \conV, \conW, \totV{} and \totW. It turns out that there is a nice geometrical intuition that we can employ.

First, for an $a\in L^i_\pm$, we call $a \oplus_i \ones$ \emph{an $i$-strip}\footnote{We sometimes omit the index and call $i$-strips just strips whenever it does not lead to a confusion.}. Then, anytime $(a,b) \in \con^i$, we can think of the corresponding pair $(a \oplus_i \ones,\, b \oplus_i \ones)\in \conO$ as of a pair of \emph{``disjoint''} $i$-strips and, similarly, $(c,d) \in \tot^i$ gives a pair of strips that are \emph{``covering the whole space''}, i.e. $(c \oplus_i \ones,\, d \oplus_i \ones)\in \totO$. This terminology is motivated by the case when $\p I = \{1,2\}$. Both cases are displayed in the picture below for $\p L^1 \oplus \p L^2$:

\tikzset{
    pls/.style = { dashed, pattern=pluses , pattern color=blue, draw opacity=0.3},
    plscol/.style = { color = blue},
    mns/.style = { dashed, pattern=minuses, pattern color=red,  draw opacity=0.3},
    mnscol/.style = { color = red},
    int/.style = { very thick, <->, >=stealth },
    intname/.style = { midway, above },
}
\newcommand\drawframe{
    \draw (0,0) rectangle (3, 3);
    \draw (1.5,-0.3) node {$\p L^1$};
    \draw (-0.3,1.5) node {$\p L^2$};
}
\newcommand\mathwithshadow[1]{
    node[pos=.5] {$#1$}
}
\begin{center}
\begin{tikzpicture}
    \begin{scope}
    \draw[pls] (1.8,0) rectangle (2.8, 3) \mathwithshadow{a\oplus_1 \ones};
    \draw[mns] (0.3,0) rectangle (1.3, 3) \mathwithshadow{a\oplus_1 \ones};

    \draw[int,plscol] (1.8,3.5) -- (2.8,3.5) node[intname] {a};
    \draw[int,mnscol] (0.3,3.5) -- (1.3,3.5) node[intname] {b};
    \drawframe
    \end{scope}

    \begin{scope}[shift={(5,0)}]
    \draw[pls] (1.2,0) rectangle (3, 3) \mathwithshadow{c\oplus_1 \ones};
    \draw[mns] (0,0) rectangle (1.8, 3) \mathwithshadow{d\oplus_1 \ones};

    \draw[int,plscol] (1.2,3.5) -- (3,3.5) node[intname] {c};
    \draw[int,mnscol] (0,3.6) -- (1.8,3.6) node[intname] {d};
    \drawframe
    \end{scope}
\end{tikzpicture}
\end{center}
Therefore, all elements of $\conO$ and $\totO$ are pairs of strips. It is rather a technical lemma that the set of $i$-strips in the coproduct has exactly the same structure as the d-frame $\p L^i$:

\begin{lemma}\label{p:strips}
    Let $S^i_\pm$ be the set of all $i$-strips in $\bigoplus_i L^i_\pm$. If all $L^i_\pm$'s are nontrivial\footnote{A frame is \emph{trivial} if it is isomorphic to the trivial frame $\mathbf 1 = \{ 0 = 1 \}$.} then
    \[ (S^i_+, S^i_-;\, \conO\cap (S^i_+\ttimes S^i_-),\, \totO\cap (S^i_+\ttimes S^i_-)) \cong \p L^i. \]
\end{lemma}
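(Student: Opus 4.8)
My isomorphism would be the cone injection $\iota^i = (\iota^i_+, \iota^i_-)$ corestricted to its image, since by definition $S^i_\pm = \{\, a \oplus_i \ones : a \in L^i_\pm \,\} = \iota^i_\pm[L^i_\pm]$. As each $\iota^i_\pm$ is a frame homomorphism, its image $S^i_\pm$ is automatically closed under the frame operations of $\bigoplus_i L^i_\pm$ (finite meets, arbitrary joins, and the two bounds $\npm$ and $\prod'_i L^i_\pm$ are all preserved), so $S^i_\pm$ is a subframe. The plan then has two parts: first show that $\iota^i_\pm$ is injective, whence it is a frame isomorphism onto $S^i_\pm$; and second show that $\iota^i$ identifies $\con^i$ (resp.\ $\tot^i$) with $\conO \cap (S^i_+\ttimes S^i_-)$ (resp.\ $\totO \cap (S^i_+\ttimes S^i_-)$).

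For the first part I would record two coordinate computations, both exploiting that \emph{every} $L^i_\pm$ is nontrivial, i.e.\ $0 \neq 1$. (A) One has $\iota^i_\pm(x) = \npm$ iff $x = 0$, and $\iota^i_\pm(x)$ equals the top iff $x = 1$: indeed $x *_i \ones$ lies in $\npm$ exactly when $x = 0$ (its only coordinate that can vanish is the $i$-th), while $\ones \leq x *_i \ones$ exactly when $x = 1$. Injectivity then follows: if $\iota^i_+(x) = \iota^i_+(y)$ with $x \neq 0$, then $x *_i \ones \notin \np$ lies in $\iota^i_+(y) = \downset(y *_i \ones) \cup \np$, forcing $x *_i \ones \leq y *_i \ones$ and hence $x \leq y$; by symmetry $x = y$, and the case $x = 0$ is covered by (A). (B) For $i \neq j$, an $i$-strip and a $j$-strip coincide only in two extreme cases: $\iota^i_\pm(x) = \iota^j_\pm(y)$ forces $x, y \in \{0,1\}$ with $x = y$. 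This is the same comparison: if $x \neq 0$ then $x *_i \ones \leq y *_j \ones$ forces the $j$-th coordinate $y$ to equal $1$, so $\iota^j_\pm(y)$ is the top and therefore so is $\iota^i_\pm(x)$, giving $x = 1$ by (A).

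For the second part, the forward direction is routine: if $(a,b) \in \con^i$ then $(a *_i \ones, b *_i \ones) \in \conO$ by the very definition of $\conO$, and both coordinates are $i$-strips, so $(\iota^i_+(a), \iota^i_-(b)) \in \conO \cap (S^i_+\ttimes S^i_-)$; the case of $\tot$ is identical. For the converse I would argue as follows. Assume $(\iota^i_+(a), \iota^i_-(b)) \in \conO$; by definition of $\conO$ this pair equals $(c \oplus_j \ones, d \oplus_j \ones)$ for some index $j$ and some $(c,d) \in \con^j$. If $j = i$, injectivity gives $a = c$ and $b = d$, so $(a,b) \in \con^i$. If $j \neq i$, then (B) applied in each coordinate forces $a = c$ and $b = d$ to lie in $\{0,1\}$, so $(a,b) = (c,d)$ is one of $\bot, \dff, \dtt, \top$. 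The first three already lie in $\con^i$ (by the axiom (\con--$\downset$) together with $\dtt, \dff \in \con^i$), while the corner $\top$ is impossible: $\top \in \con^j$ and $\dff \in \tot^j$ share the same minus-coordinate, so (\con-\tot) gives $\top \sqsubseteq \dff$, i.e.\ $1 \leq 0$ in $L^j_+$, contradicting nontriviality. The $\tot$ case is dual, the forbidden corner now being $\bot$: from $\bot \in \tot^j$ and $\dtt \in \con^j$ (which share the minus-coordinate) the axiom (\con-\tot) forces $\dtt \sqsubseteq \bot$, again giving $1 \leq 0$. Together with the forward direction this shows $\iota^i$ reflects and preserves both relations, so it is a d-frame isomorphism.

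The hard part will be exactly this backward cross-index case $j \neq i$: one must recognise that an $i$-strip equal to a $j$-strip can only be a ``corner'' element $\bot, \dff, \dtt$ or $\top$, and then eliminate the two bad corners, $\top$ for consistency and $\bot$ for totality, using (\con-\tot) and nontriviality. Everything else reduces to the elementary coordinatewise facts (A) and (B), which become available the moment $0 \neq 1$ is assumed on every component.
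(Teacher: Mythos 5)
Your proof is correct, and the core route is the same as the paper's: the isomorphism is the injection $\iota^i=(\iota^i_+,\iota^i_-)$ corestricted to the $i$-strips, with injectivity derived from nontriviality. The differences are worth noting, though. For the frame part, the paper factors $\iota^i_\pm=\sem_\pm\circ\kappa^i_\pm$ and cites Lemma~\ref{p:copr-basics} (hence Picado--Pultr) both for injectivity and for the inverse being a frame homomorphism, whereas you reprove the needed coordinate facts (A)/(B) from scratch; this is more elementary but substantively the same. The real divergence is in the relational part: the paper dispatches it with ``from the definitions'', which silently assumes that a pair of $i$-strips can belong to $\conO$ (resp.\ $\totO$) only by way of a pair from $\con^i$ (resp.\ $\tot^i$). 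That is exactly what is not immediate: since $\sem_\pm$ collapses all of $\mathbf n_\pm$ to the bottom, and since a nontrivial $i$-strip can coincide with a $j$-strip at the top, membership in $\conO$ could a priori be witnessed by a different index $j\neq i$. Your case (B) plus the corner analysis --- the coincidence forces one of $\bot,\dff,\dtt,\top$; the first three already lie in $\con^i$ and $\tot^i$ by (\con,\tot--$\dtt,\dff$) and the closure axioms; the bad corners $\top\in\con^j$ and $\bot\in\tot^j$ are killed by (\con--\tot) together with $0\neq1$ in $L^j_\pm$ --- is precisely the verification hidden behind the paper's phrase. So your write-up is not a different proof so much as a completion of the paper's: it supplies the one step where \emph{reflection} (as opposed to preservation) of the relations genuinely requires an argument.
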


Moreover, finite $\wedge$-combinations of pairs of strips is something that we can imagine as a pair consisting of a rectangle and a cross. For example, let $\alpha \in \conO$ be a pair of $1$-strips and $\alpha'\in \conO$ a pair of $2$-strips. Then, as the picture below suggests, the plus coordinate of $\alpha \wedge \alpha'$ in $\p L^1 \oplus \p L^2$ is a rectangle and the minus coordinate is a cross. Notice also that the cross and rectangle are disjoint.

\begin{center}
\begin{tikzpicture}
    \begin{scope}
        \draw[pls] (2,0) rectangle (2.7, 3) \mathwithshadow{\alpha_+};
        \draw[mns] (0.4,0) rectangle (1, 3) \mathwithshadow{\alpha_-};
        \drawframe
    \end{scope}

    \node at (4,1.5) {$\wedge$};

    \begin{scope}[shift={(5,0)}]
        \draw[pls] (0,2) rectangle (3, 2.7) \mathwithshadow{\alpha'_+};
        \draw[mns] (0,0.4) rectangle (3, 1) \mathwithshadow{\alpha'_-};
        \drawframe
    \end{scope}

    \node at (9,1.5) {$=$};

    \begin{scope}[shift={(10,0)}]
        \begin{scope}
        \clip (0,2) rectangle (2.7, 2.7);
        \fill[pls] (2,0) rectangle (3, 3);
        \end{scope}


        \draw[mns] (0,0.4) rectangle (3, 1);
        \draw[mns] (0.4,0) rectangle (1, 3);

        \node at (1.0,1.5) {$\alpha_-\vee\alpha'_-$};
        \node at (2.2,2.5) {$\alpha_+\wedge\alpha'_+$};
        \drawframe
    \end{scope}
\end{tikzpicture}
\end{center}
The picture for two pairs of strips $\beta, \beta'\in \totO$ is similar but this time the cross and rectangle of $\beta\wedge \beta'$ cover the whole space.
%
%
%
%
%
%
%

This geometrical intuition builds up well for these formal definitions:
$\gamma = \bigwedge_i \gamma^i$, where $\gamma^i = c^i \oplus_i \ones$ ($\forall i\in \p I$), is a \emph{rectangle} if there exists a finite $I(\gamma) \finsubseteq \p I$ such that $c^i \not= 1$ iff $i\in I(\gamma)$. Similarly, $\delta = \bigvee_i \delta^i$, where $\delta^i = d^i \oplus_i \ones$, is a \emph{cross} if for some finite $I(\delta)\finsubseteq \p I$, $d^i \not= 0$ iff $i\in I(\delta)$.

Notice that, by Lemma~\ref{p:contot-canon}, every element of \conW{} (resp.\ \totW{}) is of the form $(\bigwedge_i \alpha^i_+, \bigvee_i \alpha^i_-)$ with only finitely many nontrivial $\alpha^i$'s. In the present terminology, $\alpha$ is a pair rectangle--cross and this exactly matches the geometrical intuition we just discussed.

\begin{observation}\label{o:rectangles}
    Rectangles are exactly the elements of $B_\pm$.
\end{observation}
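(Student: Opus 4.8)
The plan is to unwind both sides of the claimed equality and exhibit an explicit correspondence. Under our identification, an element of $B_\pm$ is an ideal of the form $\sem[u] = \downset u \cup \mathbf n$ for some $u \in \prod'_i L^i$ (I suppress the sign, as the two cases are identical); since $u$ lies in the \emph{restricted} product, only finitely many of its coordinates differ from $1$. On the other side, a rectangle is a meet $\gamma = \bigwedge_i \gamma^i$ of $i$-strips $\gamma^i = c^i \oplus_i \ones$ whose support $I(\gamma) = \{ i : c^i \neq 1 \}$ is finite. The matching data---a finite set of indices together with a non-unit value at each---is visibly the same on both sides, so I expect a clean bijection once the meet of strips is computed explicitly.

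First I would reduce the (possibly infinite) meet defining a rectangle to a finite intersection. For $i \notin I(\gamma)$ we have $c^i = 1$, hence $\gamma^i = 1 \oplus_i \ones = \downset(1 *_i \ones) \cup \mathbf n = \downset \ones \cup \mathbf n$, which is the whole carrier $\prod'_i L^i$ and thus the top element of the coproduct frame; such factors drop out, so $\gamma = \bigwedge_{i \in I(\gamma)} \gamma^i$. Since meets of $\p C$-ideals are intersections, I then compute $\bigcap_{i \in I(\gamma)}(\downset(c^i *_i \ones) \cup \mathbf n)$. The key pointwise observation is that $u \in c^i \oplus_i \ones$ iff $u \in \mathbf n$ or $u_i \leq c^i$. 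Running this over the finite set $I(\gamma)$ shows that $u \in \gamma$ iff $u \in \mathbf n$ or $u_i \leq c^i$ for every $i \in I(\gamma)$.

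To finish the forward inclusion I would set $c \in \prod'_i L^i$ by $c_i = c^i$ for $i \in I(\gamma)$ and $c_i = 1$ otherwise; since $I(\gamma)$ is finite, $c$ lies in the restricted product. The condition ``$u_i \leq c^i$ for all $i \in I(\gamma)$'' is exactly ``$u \leq c$'', so the intersection above equals $\downset c \cup \mathbf n = \sem[c]$, a generator. Conversely, given $\sem[u]$ with $u \in \prod'_i L^i$, I take $I = \{ i : u_i \neq 1 \}$ (finite) and $c^i = u_i$; the same computation, read backwards, identifies $\sem[u]$ with the rectangle $\bigwedge_i (c^i \oplus_i \ones)$ of support $I$. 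This establishes both inclusions.

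The only genuinely delicate point is the handling of the bottom ideal $\mathbf n$ inside the intersection: a tuple $u$ may belong to a strip $c^i \oplus_i \ones$ not because its $i$-th coordinate is controlled by $c^i$, but because $u$ is already ``degenerate'', i.e.\ has some zero coordinate and hence lies in $\mathbf n$. Propagating this disjunction through the finite intersection and verifying that the surviving condition is precisely membership in $\downset c \cup \mathbf n$ is the crux; the reduction of the infinite meet to a finite one and all remaining index-bookkeeping are routine, and the argument is identical on the plus and minus sides.
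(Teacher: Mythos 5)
Your proof is correct, but it takes a more hands-on route than the paper. The paper's argument is essentially two lines: write $u \in \prod'_i L^i_\pm$ as the finite meet $(a^1 *_{i(1)} \ones) \wedge \dots \wedge (a^n *_{i(n)} \ones)$ in the meet-semilattice $B_\pm$ (possible precisely because only finitely many coordinates differ from $1$), then invoke the fact that $\sem_\pm$ is a meet-semilattice homomorphism (Lemma~\ref{l:frm-univ}) to get $\sem[u]_\pm = (a^1 \oplus_{i(1)} \ones) \wedge \dots \wedge (a^n \oplus_{i(n)} \ones)$, a rectangle; the converse reverses the same computation. You instead unwind what the meet of strips is as a set of tuples: you discard the top factors $1 \oplus_i \ones$, use that meets of $\p C$-ideals are intersections, and track membership pointwise, handling the disjunct $u \in \mathbf n$ separately. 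This is sound --- in particular your distribution step $\forall i\,(P \vee Q_i) \iff P \vee \forall i\, Q_i$ is valid because ``$u \in \mathbf n$'' does not depend on $i$ --- and what it buys is self-containedness: you in effect re-derive, inside the observation, the ``Moreover'' part of the paper's Lemma~\ref{l:oplus-wedge}, namely that a finite meet of strips equals $\downset c \cup \mathbf n = \sem[c]$ for the evident tuple $c$. The cost is duplication: everything past your reduction to a finite meet is already packaged in the homomorphism property of $\sem$ (Lemma~\ref{l:frm-univ}, or Lemma~\ref{p:copr-basics}.3), which is why the paper can dispatch the observation so briefly. Both arguments hinge on the same point, that finiteness of the support $I(\gamma)$ matches finiteness built into the restricted product $\prod'_i L^i_\pm$.
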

\begin{proof}
    Every $\gamma \in B_\pm$ is of the form $\sem[u]_\pm$ for some $u\in \prod'_i L^i_\pm$. Because $u$ has only finitely many indexes different from 1, $\sem[u]_\pm = \sem[(a^1 *_{i(1)} \ones) \wedge \dots \wedge (a^n *_{i(n)} \ones)]_\pm = \sem[a^1 *_{i(1)} \ones]_\pm \wedge \dots \wedge \sem[a^n *_{i(n)} \ones]_\pm = (a^1 \oplus_{i(1)} \ones) \wedge \dots \wedge (a^n \oplus_{i(n)} \ones)$. The reverse direction is similar.
\end{proof}

There is a nice interplay between rectangles and crosses:

\begin{lemma}\label{p:rec-cross}
    Let $\gamma = \bigwedge_i \gamma^i$ be a rectangle and let $\delta = \bigvee_i \delta^i$ be a cross such that $\gamma \leq \delta$. Then, there exists an $i\in I(\gamma)$ such that $\gamma^i \leq \delta^i$.
\end{lemma}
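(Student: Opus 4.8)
The plan is to turn the order relation $\gamma\le\delta$ into a membership question inside the coproduct frame and then pin down exactly which elements of $\prod'_i L^i$ lie in a cross. Recall that $a\oplus_i\ones = \iota^i(a)$, so writing $\gamma^i = c^i\oplus_i\ones$ and $\delta^i = d^i\oplus_i\ones$ we have $\gamma = \bigwedge_i\iota^i(c^i)$ and $\delta = \bigvee_i\iota^i(d^i)$. By Observation~\ref{o:rectangles} the rectangle is a single generator, $\gamma = \sem[c]_\pm$ for the element $c = (c^i)_i\in\prod'_i L^i$ with $c^i = 1$ exactly for $i\notin I(\gamma)$. Since $\sem[c]_\pm = \downset c\cup\mathbf n$ is the \emph{smallest} $\p C$-ideal containing $c$ and $\delta$ is a $\p C$-ideal, I get $\gamma\le\delta$ iff $c\in\delta$. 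Thus everything reduces to understanding membership in $\delta$.

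The key step, and the main obstacle, is to bound $\delta$ by the ``test set''
\[ T \DEFEQ \{\, u\in\prod\nolimits'_i L^i ~|~ u\in\mathbf n \text{ or } u_i\le d^i \text{ for some } i\in\p I \,\}. \]
I only need the inclusion $\delta\subseteq T$ (the reverse, giving equality, is easy but unnecessary). Since $\delta = \CIdlG[]{\bigcup_i\delta^i}$ is the smallest $\p C$-ideal containing all the strips, it suffices to verify that $T$ is itself a $\p C$-ideal containing each $\delta^i$. Downward closure of $T$ and the containments $\delta^i\subseteq T$ are routine. The heart of the matter is closure under the generating covers $\{a^k *_j u\}_k\dashv(\bigvee_k a^k)*_j u$: assuming every $a^k *_j u\in T$, set $w = (\bigvee_k a^k)*_j u$; if $w\in\mathbf n$ we are done, so suppose all coordinates of $w$ are nonzero. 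Then for each $k$ the element $a^k *_j u$ agrees with $w$ off the $j$-th coordinate, so its membership in $T$ forces one of three things: $a^k = 0$; or $w_i = u_i\le d^i$ for some $i\ne j$, in which case $w\in T$ immediately; or $a^k\le d^j$. If no immediate witness occurs, then $a^k\le d^j$ for every $k$, whence $w_j = \bigvee_k a^k\le d^j$ and again $w\in T$. This establishes $\delta\subseteq T$.

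Finally I would extract the required index from $c\in\delta\subseteq T$. If $c\in\mathbf n$, some coordinate $c_{i_0} = 0$; since $c_i = 1$ for $i\notin I(\gamma)$, necessarily $i_0\in I(\gamma)$, and $\gamma^{i_0} = \iota^{i_0}(0) = \mathbf n\le\delta^{i_0}$. Otherwise $c_i\le d^i$ for some $i$: if this $i$ lies in $I(\gamma)$ then $\gamma^i = \iota^i(c^i)\le\iota^i(d^i) = \delta^i$ directly; the only alternative is $i\notin I(\gamma)$, which forces $c^i = 1\le d^i$, i.e.\ $d^i = 1$ and $\delta = \top$ — a degenerate cross ruled out as soon as $\gamma$ is a proper (non-top) rectangle. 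In each genuine case we obtain $i\in I(\gamma)$ with $\gamma^i\le\delta^i$, as claimed. The whole difficulty is concentrated in the cover-closure of $T$; once $\delta\subseteq T$ is secured, the lemma is a short case analysis.
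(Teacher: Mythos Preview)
Your approach mirrors the paper's: convert $\gamma\le\delta$ into the membership $c\in\delta$, then use that a cross is the \emph{union} of its strips. The paper cites this union fact as a separate lemma on finite joins (Lemma~\ref{l:oplus-vee}); your test set $T$ and the check that it is a $\p C$-ideal amount to an inline proof of that same lemma, and this part is correct.

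The unresolved step is the last case of your final paragraph. When the witness $i$ with $c_i\le d^i$ lands outside $I(\gamma)$ you deduce $d^i=1$, hence $\delta=\top$, and then assert this is ``ruled out as soon as $\gamma$ is a proper (non-top) rectangle''. That inference is unsupported: a proper rectangle is certainly allowed to sit below~$\top$. Concretely, with $\p I=\{1,2\}$ and $L^1$ containing some $0<c^1<1$, take $c^2=1$, $d^1=0$, $d^2=1$; then $I(\gamma)=\{1\}$, $\gamma\le\delta=\top$, yet $\gamma^1\not\le\delta^1=\mathbf n$. So in this configuration the lemma's conclusion actually fails, and your dismissal of the case is not merely hasty but impossible to repair without an extra hypothesis. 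The paper's own proof has the identical slip at the mirror point, writing ``because $i\in I(\delta)$, $d^i\not=1$'' where the definition of $I(\delta)$ only yields $d^i\not=0$. In the intended applications one can normalise the cross so that no $d^i$ equals~$1$ (or add this to the hypothesis), which is presumably what both arguments tacitly assume.
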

\begin{proof}
    Let $\gamma^i = c^i \oplus_i \ones$ and $\delta^i = d^i \oplus_i \ones$, for every $i\in \p I$. By Observation~\ref{o:rectangles}, $\gamma = \sem[u]_\pm$ for some $u\in \prod'_i L^i_\pm$ such that, for every $i \in \p I$, $(u)_i = c^i$. This means that $(u)_i \not= 1$ iff $i\in I(\gamma)$. Also, by Lemma~\ref{l:oplus-vee}, $\delta$ has a form of a finite union $\bigcup_{i\in I(\delta)} \delta^i$. If $c^i = 0$ for some $i\in I(\gamma)$, then $c^i \leq d^i$. Otherwise, $c^i \not= 0$ for all $i\in I(\gamma)$ and, since $\gamma\leq \delta$ iff $u\in \delta$, there must exist an $i\in I(\delta)$ such that $u\in \delta^i$ and then, by Lemma~\ref{p:copr-basics}.1, $(u)_i = c^i \leq d^i$. Finally, because $i\in I(\delta)$, $d^i\not= 1$ and so also $c^i\not= 1$ and $i\in I(\gamma)$.
\end{proof}

\subsection{Proof of (\con-\tot)}

In this section we prove that $\bigoplus_i \p L^i$ is a d-frame. To simplify our proofs, we can assume that all $\p L^i$'s are nontrivial thanks to the following lemma.

\begin{lemma}\label{l:trivial-case}
    If $L^i_+ = \mathbf 1$ or $L^i_- = \mathbf 1$ for some $i\in \p I$, then $\bigoplus_i \p L^i$ satisfies (\con-\tot).
\end{lemma}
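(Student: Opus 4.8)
The plan is to show that, under the hypothesis, the whole coproduct degenerates to the trivial d-frame, so that (\con-\tot) holds for the cheapest possible reason. By the symmetry of d-frames I may fix some $j \in \p I$ with $L^j_+ = \mathbf 1$; the case $L^j_- = \mathbf 1$ is handled dually.

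The first, and essentially the only, nontrivial step is to observe that triviality of one frame component of a d-frame forces triviality of the other. Since $0 = 1$ in $L^j_+$, the two distinguished elements collapse onto the bounds of the information order: $\dff = (0,1) = (1,1) = \top$ and $\dtt = (1,0) = (0,0) = \bot$ in $L^j_+ \ttimes L^j_-$. The axiom (\con,\tot--$\dtt,\dff$) gives $\top = \dff \in \con^j$ and $\bot = \dtt \in \tot^j$, and the plus-coordinates of $\top$ and $\bot$ agree, both being the unique element of $L^j_+$. Hence (\con-\tot) for the \emph{given} d-frame $\p L^j$ applies to $\alpha = \top$ and $\beta = \bot$ and yields $\top \sqsubseteq \bot$; reading off the minus-coordinate gives $1 \leq 0$ in $L^j_-$, so $L^j_- = \mathbf 1$ as well.

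Next I would show that a single trivial factor collapses the corresponding coproduct frame. The least $\p C$-ideal of $\bigoplus_i L^i_\pm$ is $\npm = \{\, u \in \prod'_i L^i_\pm : u_i = 0 \text{ for some } i \,\}$. As $L^j_\pm = \mathbf 1$, every $u \in \prod'_i L^i_\pm$ has $u_j = 0$, so $\npm$ already equals the whole generating semilattice $\prod'_i L^i_\pm$; since every $\p C$-ideal is contained in the latter, there is exactly one $\p C$-ideal and $\bigoplus_i L^i_\pm = \mathbf 1$. Applying this for both signs, both frame components of $\bigoplus_i \p L^i$ are trivial, whence $\OLxOL$ is a one-element set. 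For its single pair $\bot = \top$ the conclusion $\alpha \sqsubseteq \beta$ of (\con-\tot) holds automatically, and the axiom is satisfied. The only real subtlety — the ``main obstacle'' — is the step carried out in the second paragraph: one must invoke $\p L^j$'s own instance of (\con-\tot), rather than any property of the coproduct, to see that the trivial factor is in fact trivial on both sides. After that, the collapse of the coproduct frames and the triviality of (\con-\tot) are routine.
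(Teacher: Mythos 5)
Your proof is correct and takes essentially the same route as the paper's: invoke (\con-\tot) for $\p L^j$ itself to force $L^j_- = \mathbf 1$ whenever $L^j_+ = \mathbf 1$, observe that both coproduct frames then collapse to $\{\npm\}$, and conclude that the resulting trivial pre-d-frame satisfies (\con-\tot) vacuously. You simply spell out the details (the collapse $\dff = \top$, $\dtt = \bot$, and $\npm = \prod'_i L^i_\pm$) that the paper compresses into a one-line observation.
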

\begin{proof}
    Observe that, by (\con-\tot) for $\p L^i$, if $L^i_+ = \mathbf 1$ then automatically also $L^i_- = \mathbf 1$, and vice versa. Therefore, $\bigoplus_i L^i_\pm = \{\npm\}$ and so $\bigoplus_i \p L^i$ is trivial and satisfies (\con-\tot).
\end{proof}

To show that (\con-\tot) holds for $\bigoplus_i \p L^i$ we will use Proposition~\ref{p:simple-axioms}. In order to be able to do that we need to prove that ($\mu_\pm$-\con-\tot) and (Indep$_\pm$) hold:

\begin{lemma}
    ($\mu_\pm$-\con-\tot) holds for $\bigoplus_i \p L^i$:
\end{lemma}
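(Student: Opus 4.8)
The plan is to establish ($\mu_+$-\con-\tot) and to leave ($\mu_-$-\con-\tot) to the dual argument, following the paper's symmetry convention. The guiding idea is that, although $\alpha$ and $\beta$ are built from data scattered across all the components $\p L^i$, the whole inequality can be localised to a single index $i$, where the axiom (\con-\tot) of the \emph{component} d-frame $\p L^i$ is available. The two tools that make this localisation possible are Lemma~\ref{p:rec-cross} and the strip isomorphism of Lemma~\ref{p:strips}; since the latter needs nontriviality, I would first invoke Lemma~\ref{l:trivial-case} to assume every $L^i_\pm$ is nontrivial.

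First I would put $\alpha$ and $\beta$ into the normal forms of Lemma~\ref{p:contot-canon}. Writing $\alpha = (\bigvee_i \alpha^i_+,\, \bigwedge_i \alpha^i_-)$ with $\alpha^i = (a^i_+\oplus_i\ones,\, a^i_-\oplus_i\ones)$ and $(a^i_+,a^i_-)\in\con^i$, and $\beta = (\bigwedge_i \beta^i_+,\, \bigvee_i\beta^i_-)$ with $\beta^i = (b^i_+\oplus_i\ones,\, b^i_-\oplus_i\ones)$ and $(b^i_+,b^i_-)\in\tot^i$, the point is that $\beta_+$ is a \emph{rectangle} and $\alpha_+$ is a \emph{cross}. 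Thus the hypothesis $\beta_+\leq\alpha_+$ is precisely an inequality rectangle~$\leq$~cross, and Lemma~\ref{p:rec-cross} applies: it produces a single index $i$ for which $\beta^i_+\leq\alpha^i_+$, that is, $b^i_+\oplus_i\ones \leq a^i_+\oplus_i\ones$ among the $i$-strips.

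Having localised to this $i$, I would transport the inequality through the isomorphism of Lemma~\ref{p:strips} to obtain $b^i_+\leq a^i_+$ inside $L^i_+$. Now both $(a^i_+,a^i_-)\in\con^i$ and $(b^i_+,b^i_-)\in\tot^i$ live in $\p L^i$, so I can run (\con-\tot) there. Since $(a^i_+,b^i_-)\sqsupseteq(b^i_+,b^i_-)$, the axiom (\tot--$\upset$) yields $(a^i_+,b^i_-)\in\tot^i$; this pair shares its plus-coordinate with $(a^i_+,a^i_-)\in\con^i$, so (\con-\tot) for $\p L^i$ forces $(a^i_+,a^i_-)\sqsubseteq(a^i_+,b^i_-)$ and hence $a^i_-\leq b^i_-$.

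Finally I would lift this inequality back up: $a^i_-\leq b^i_-$ gives $\alpha^i_-\leq\beta^i_-$ among $i$-strips, after which monotonicity of meets and joins closes the argument via $\alpha_- = \bigwedge_j \alpha^j_- \leq \alpha^i_- \leq \beta^i_- \leq \bigvee_j \beta^j_- = \beta_-$. The main obstacle is conceptual rather than computational: everything depends on Lemma~\ref{p:rec-cross} collapsing the many-index comparison $\beta_+\leq\alpha_+$ to a single index at which the component's own (\con-\tot) becomes applicable. Once that index is isolated, the remainder is routine bookkeeping through the strip isomorphism.
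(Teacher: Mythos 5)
Your proof is correct and follows essentially the same route as the paper's: canonical forms via Lemma~\ref{p:contot-canon}, localisation to a single index via the rectangle--cross Lemma~\ref{p:rec-cross}, the component's (\con-\tot) axiom at that index, and monotonicity to conclude $\alpha_-\leq\beta_-$. You are in fact somewhat more careful than the paper, which silently transports the strip inequality into $\p L^i$ where you explicitly invoke Lemma~\ref{l:trivial-case} and the isomorphism of Lemma~\ref{p:strips} and spell out the (\tot--$\upset$) step.
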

\begin{proof}
    Let $\alpha = \bigvee_i \alpha^i \in \conV$ and $\beta = \bigwedge_i \beta^i \in \totW$ be in canonical forms, and assume that $\beta_+ \leq \alpha_+$. From canonicity of $\alpha$ and $\beta$, know that $\alpha_+$ is a cross and $\beta_+$ is a rectangle. By Lemma~\ref{p:rec-cross}, there is an $i\in I(\beta)$ such that $\beta^i_+ \leq \alpha^i_+$. From (\con-\tot) for $\p L^i$, $\alpha^i_- \leq \beta^i_-$ and so
    $\alpha_- = \bigwedge_i \alpha^i_- \,\leq\, \alpha^i_- \,\leq\, \beta^i_- \,\leq\, \bigvee_i \beta^i_- = \beta_-.$
\end{proof}

\begin{lemma}
    (Indep$_\pm$) holds for $\bigoplus_i \p L^i$.
\end{lemma}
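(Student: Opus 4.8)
The plan is to reduce, as in the earlier stages, to (Indep$_+$), since (Indep$_-$) is obtained by the left--right symmetry of the construction. So I would start from a pair $(x,b)\in(L_+\ttimes B_-)\cap\downset\conWbV$, where by Observation~\ref{o:rectangles} the second coordinate $b$ is a rectangle with finite support $I(b)$, and fix $\alpha\in\conWbV$ with $(x,b)\sqsubseteq\alpha$. Writing $\alpha=\bigvee_{k\in K}\alpha^k$ with each $\alpha^k\in\conW$ and putting the $\alpha^k$ into the canonical rectangle--cross form of Lemma~\ref{p:contot-canon}, I get $\alpha^k=(\bigwedge_i\mu^{k,i}_+,\ \bigvee_i\mu^{k,i}_-)$ where every $\mu^{k,i}\in\conO$ is an $i$-strip pair. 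The aim is then to manufacture a single witness $\gamma\in\conV$ with $(x,b)\sqsubseteq\gamma$.

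The first move is to localise each summand to the finite support of $b$. Since $b\leq\alpha_-\leq\bigvee_i\mu^{k,i}_-$ and the right-hand side is a cross, Lemma~\ref{p:rec-cross} produces, for every $k\in K$, an index $i(k)\in I(b)$ with $b^{i(k)}\leq\mu^{k,i(k)}_-$ at the level of strips. As a rectangle lies below each of its own defining strips, this already yields $b\leq\mu^{k,i(k)}_-$, while on the other side $\mu^{k,i(k)}_+\geq\bigwedge_i\mu^{k,i}_+=\alpha^k_+$ because $\mu^{k,i(k)}_+$ is one of the meetands. Thus each $\alpha^k$ can be replaced by the single strip pair $\mu^{k,i(k)}\in\conO$ without losing the domination of $(x,b)$.

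The difficulty is that $K$ may be infinite, so the naive join $\bigvee_{k\in K}\mu^{k,i(k)}$ of these strip pairs lies only in the closure of $\conO$ under \emph{arbitrary} joins, not in $\conV$. This is exactly where the localisation pays off: the chosen indices $i(k)$ all lie in the \emph{finite} set $I(b)$, so I would group the reduced strips by their index. For each $j\in I(b)$ put $K_j=\{k\in K:i(k)=j\}$ and form $P_j=\bigvee_{k\in K_j}\mu^{k,j}_+$ and $M_j=\bigwedge_{k\in K_j}\mu^{k,j}_-$. Because the $j$-strips form a frame-isomorphic copy of $\p L^j$ (Lemma~\ref{p:strips}) and the consistency relation $\con^j$ is closed under arbitrary logical joins (the fact established in the footnote to the remark on $\conWbV$, by lowering the minus coordinate to the common meet and then using $\vee$- and $\dirsqcup$-closedness), the pair $(P_j,M_j)$ is again a single $j$-strip pair in $\conO$.

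Finally I would set $\gamma=\bigvee_{j\in I(b)}(P_j,M_j)$. Since $I(b)$ is finite, this is a finite logical join of $\conO$-elements, so $\gamma\in\conV$. Reindexing the join over $K$ as a join over $\bigcup_{j\in I(b)}K_j$ shows $\gamma_+=\bigvee_{k\in K}\mu^{k,i(k)}_+\geq\bigvee_k\alpha^k_+=\alpha_+\geq x$ and $\gamma_-=\bigwedge_{k\in K}\mu^{k,i(k)}_-\geq b$, so $(x,b)\sqsubseteq\gamma\in\conV$, i.e.\ $(x,b)\in\downset\conV$. I expect the main obstacle to be precisely this infinite join in $\conWbV$: the whole argument hinges on first collapsing each cross to a single strip indexed inside the finite support $I(b)$ (Lemma~\ref{p:rec-cross}), and then absorbing the remaining possibly-infinite joins into single strips via closure of each $\con^j$ under arbitrary logical joins, which is what keeps the final join finite and hence inside $\conV$.
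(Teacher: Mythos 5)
You follow the paper's own route almost step for step: decompose the upper bound in $\conWbV$ into canonical rectangle--cross pairs (Lemma~\ref{p:contot-canon}), use Lemma~\ref{p:rec-cross} against the rectangle $b$ to choose, for each $k$, an index $i(k)$ in the finite support $I(b)$, group the $k$'s by this index, collapse each group to a single element of $\conO$ inside the copy of $\p L^j$ given by the $j$-strips, and take the finite join over $I(b)$ to land in $\conV$. The one genuine gap is in the collapsing step. You set $M_j=\bigwedge_{k\in K_j}\mu^{k,j}_-$, an \emph{infinitary} meet computed in the ambient frame $\bigoplus_i L^i_-$, and assert that $(P_j,M_j)$ is a pair of $j$-strips in $\conO$, citing Lemma~\ref{p:strips} and the footnote fact about closure of $\con$ under arbitrary logical joins. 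But Lemma~\ref{p:strips} and Lemma~\ref{p:copr-basics} only guarantee that \emph{arbitrary joins} and \emph{finite meets} of $j$-strips agree with the corresponding operations of $L^j_-$; nothing you cite says that an ambient infinite meet of $j$-strips is again a $j$-strip, let alone the strip corresponding to the meet taken in $L^j_-$. A subframe need not be closed under ambient infinite meets (the opens of a space inside its powerset frame are the standard example), so when $K_j$ is infinite this step requires its own argument. The claim does happen to be true here --- meets of $\p C$-ideals are intersections, and writing $\mu^{k,j}_-=m_{k,j}\oplus_j\ones$ one can check directly that $\bigcap_{k}(m_{k,j}\oplus_j\ones)=(\bigwedge_k m_{k,j})\oplus_j\ones$ --- but that verification is precisely what your citations do not provide, and without it the proof is incomplete. (A smaller omission: Lemma~\ref{p:strips} assumes all $L^i_\pm$ are nontrivial; the paper first disposes of the trivial case via Lemma~\ref{l:trivial-case}.)

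The repair is easy, and it is exactly what the paper does: never form the meet $M_j$ at all. Your localisation step already gives $b^{j}\leq\mu^{k,j}_-$ for every $k\in K_j$, so use the \emph{fixed} strip $b^j$ as the minus coordinate: by downward closure of $\con^j$, transported along Lemma~\ref{p:strips}, each $(\mu^{k,j}_+,\,b^j)$ lies in $\conO$, and then $\vee$- and $\dirsqcup$-closedness of $\con^j$ yield $(\bigvee_{k\in K_j}\mu^{k,j}_+,\;b^j)\in\conO$; only joins of strips occur, which Lemma~\ref{p:copr-basics}.4 does cover. The finite join over $j\in I(b)$ then produces $(\bigvee_{k}\mu^{k,i(k)}_+,\,b)\in\conV$ above $(x,b)$, and the rest of your argument goes through unchanged.
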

\begin{proof}
    Let $(x,b_-)\in (L_+\ttimes B_-)\cap \downset \conWbV$. Denote its upper bound $(\bigvee_k \alpha^k_+, \bigwedge_k \alpha^k_-)$ where, for each $k$, $\alpha^k = (\bigwedge_i \alpha^{k,i}_+, \bigvee_i \alpha^{k,i}_-)$ is a pair rectangle--cross from \conW. Because $b_-\in B_-$, it is a rectangle of the form $b_- = \bigwedge_i \gamma^i$ (Observation~\ref{o:rectangles}).  Because, for every $k$, $b_-\leq \alpha^k_-$, by Lemma~\ref{p:rec-cross}, there exists an $i(k)\in I(b_-)$ such that $\gamma^{i(k)} \leq \alpha^{k,i(k)}_-$. Fix an $i\in I(b_-)$ and set $K(i) = \{ k ~|~ i(k) = i \}$.
    By Lemma~\ref{l:trivial-case}, we can assume that all $L^i_\pm$'s are nontrivial and because $\{ \alpha^{k, i} : k\in K(i)\}$ are all pairs of $i$-strips and $\gamma^i$ is an $i$-strip, by Lemma~\ref{p:strips}, we can carry the reasoning in the rest of this paragraph in the d-frame $\p L^i$. Since $\con^{i(k)}$ is downwards closed and $\gamma^i\leq \alpha^{k,i}$ ($\forall k\in K(i)$), also $(\alpha^{k, i(k)}_+, \gamma^{i(k)}) \in \conO$ and, therefore, by $\dirsqcup$ and $\vee$-closeness of $\con^{i(k)}$, $(\bigvee_{k\in K(i)} \alpha^{k,i}_+, \gamma^i)\in \conO$.

    Finally, because $I(b_-)$ is finite
    \[ \bigvee_{i\in I(b_-)} (\bigvee_{k\in K(i)} \alpha^{k,i}_+, \gamma^i) =  (\bigvee_{i\in I(b_-)} (\bigvee_{k\in K(i)} \alpha^{k,i}_+), \bigwedge_{i\in I(b_-)}\gamma^i)= (\bigvee_k \alpha^{k, i(k)}_+, b_-) \in \conV. \]
    Because $\alpha^k_+ = \bigwedge_i \alpha^{k,i}_+ \leq \alpha^{k,i(k)}_+\ (\forall k)$, $x \leq \bigvee_k \alpha^k_+ \leq \bigvee_k \alpha^{k,i(k)}_+$ and so $(x,b_-)\in \downset \conV$.
\end{proof}

By Proposition~\ref{p:simple-axioms}, we know that $\bigoplus_i \p L^i$ is a d-frame and, moreover, by the same reasoning as for frames, we can prove that it has the universal property of a coproduct:
\begin{theorem}\label{t:coproduct-univ}
    $\bigoplus_i \p L^i$ is the coproduct in the category of d-frames.
\end{theorem}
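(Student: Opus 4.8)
The plan is to verify the universal property of the coproduct directly, reusing the two universality results already at our disposal: the universal property of the semilattice coproduct $\prod'_i L^i_\pm$ from Section~\ref{s:coprod-frm}, and the universal property of presentations of pre-d-frames from Lemma~\ref{l:dfrm-univ}. The substantive part of the theorem---that $\bigoplus_i \p L^i$ is a genuine d-frame at all---has already been discharged by Proposition~\ref{p:simple-axioms}, so what remains is purely the universal property, and this factors cleanly through Lemma~\ref{l:dfrm-univ} without ever touching the transfinitely generated relations $\CON$ and $\TOT$ head-on.

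First I would confirm that the frame homomorphisms $\iota^j = (\iota^j_+, \iota^j_-)$, with $\iota^j_\pm = \sem_\pm \circ \kappa^j_\pm$, are d-frame homomorphisms, so that $\bigoplus_i \p L^i$ together with the $\iota^j$ forms a cocone. This is immediate from the definition of $\conO$ and $\totO$: for $(a,b)\in \con^j$ we have, by construction, $(a *_j \ones,\, b *_j \ones)\in \conO$, whence $(\iota^j_+(a), \iota^j_-(b)) = \sem[(a *_j \ones,\, b *_j \ones)] \in \sem[\conO]\subseteq \CON$, and dually for $\tot^j$ and $\TOT$.

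For the universal property, let $\p M = (M_+, M_-;\, \con^M, \tot^M)$ be a d-frame and $\{ f^j\colon \p L^j \to \p M \}_j$ a cocone of d-frame homomorphisms. Since $B_\pm = \prod'_i L^i_\pm$ is the coproduct of the $L^i_\pm$ in meet-semilattices, the semilattice homomorphisms $f^j_\pm$ induce a unique meet-semilattice homomorphism $g_\pm\colon B_\pm \to M_\pm$ with $g_\pm\circ \kappa^j_\pm = f^j_\pm$. I would then check that $g = (g_+, g_-)$ is presentation-preserving in the sense of Lemma~\ref{l:dfrm-univ}: the cover-into-join condition is the very computation used to build the frame coproduct (frame distributivity in $M_\pm$ applied to a cover $\{ a^k *_j u\}_k \dashv (\bigvee_k a^k) *_j u$), while preservation of $\conO$ and $\totO$ reduces, via $g_\pm\circ \kappa^j_\pm = f^j_\pm$, to the statement that each $f^j$ preserves $\con^j$ and $\tot^j$---which holds because each $f^j$ is a d-frame homomorphism. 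Lemma~\ref{l:dfrm-univ} then supplies a unique d-frame homomorphism $\overline f\colon \bigoplus_i \p L^i \to \p M$ with $g = \overline f\circ \sem$, and composing with $\iota^j = \sem\circ \kappa^j$ gives $\overline f\circ \iota^j = f^j$. Uniqueness follows because any competing mediating map $h$ yields a semilattice homomorphism $h\circ \sem$ satisfying $(h\circ\sem)\circ\kappa^j = f^j$, forcing $h\circ \sem = g$ by the semilattice-coproduct uniqueness, and hence $h = \overline f$ by the uniqueness clause of Lemma~\ref{l:dfrm-univ}.

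The only genuine subtlety to keep in view is the preservation of the generated relations. Arguing about $\CON$ directly would be awkward, since it is obtained as an iteration of $\p D(-)$; the whole point of routing the construction through Lemma~\ref{l:dfrm-univ} is that preservation then needs to be checked only on the generators $\conO$ and $\totO$, where it is trivial. I therefore expect no real obstacle beyond this bookkeeping, the hard content of the theorem having already been absorbed into the verification (via ($\mu_\pm$-\con-\tot) and (Indep$_\pm$)) that $\bigoplus_i \p L^i$ satisfies (\con-\tot).
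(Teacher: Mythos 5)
Your proposal is correct and follows essentially the same route as the paper's own proof: obtain the mediating meet-semilattice map from the universal property of the semilattice coproduct $\prod'_i L^i_\pm$, verify that it is presentation-preserving (covers into joins via frame distributivity in $M_\pm$ and join-preservation of the $f^j_\pm$; preservation of $\conO,\totO$ from the $f^j$ being d-frame homomorphisms), and then invoke Lemma~\ref{l:dfrm-univ}. The only differences are cosmetic: you spell out the cocone verification and the uniqueness argument, which the paper leaves implicit, while the paper writes out the cover-preservation computation that you only sketch.
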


\section*{Acknowledgement}
Discussions with Ale\v s Pultr helped greatly in the simplifications of Section~\ref{s:d-frames}.

\clearpage

\appendix

\section{Missing proofs in Section~\ref{s:presentations} and \ref{s:d-frames}}

\begin{proof}[Proof of Lemma~\ref{l:dfrm-univ} (Universality)]
    By Lemma~\ref{l:frm-univ}, $\sem$ is a pointwise meet-semilattice homomorphism which transforms $\p C_\pm$ covers into joins and, by the definition, also $\sem[\conO]\subseteq \con_*$ and $\sem[\totO]\subseteq \tot_*$. To show universality, let $f\colon \dB \to \p M$ be a presentation preserving map. Because the individual components of $f$, i.e. $f_\pm\colon B_\pm \to M_\pm$, are meet-preserving and transform covers into joins, by Lemma~\ref{l:frm-univ}, there exist unique frame homomorphisms $\overline f_\pm\colon L_\pm\to M_\pm$ such that $f_\pm = \overline f_\pm \circ \sem_\pm$.

    We need to prove that $\overline f = (\overline f_+, \overline f_-)$ is a d-frame homomorphism. Since $f$ preserves $\conO$ and $\totO$, we have that $\overline f[\sem [\conO]] = f[\conO] \subseteq \con_\p M$ which is equivalent to $\sem[\conO] \subseteq \overline f^{-1}[\con_\p M]$. Because $\overline f$ is a pair of frame homomorphisms, $\overline f^{-1}[\con_\p M]$ is $\downset$-closed, closed under $\wedge, \vee, \dirsqcup$ and contains $\dtt$ and $\dff$. However, \CON[{\sem[\conO]}] is the smallest such relation which contains $\sem[\conO]$ and so $\CON[{\sem[\conO]}] \subseteq \overline f^{-1}[\con_\p M]$. Finally, as before, this is equivalent to $ \overline f[\CON[{\sem[\conO]}]] \subseteq\con_\p M$. The case for $\overline f[\sem [\totO]] \subseteq \tot_\p M$ is similar.
\end{proof}

\begin{proof}[Proof of Lemma~\ref{l:indep-split}]
    The left-to-right implication holds immediately from (\ref{e:con-subsets}) on page~\pageref{e:con-subsets}  and the fact that \sCON{} is downwards closed. For the other implication, let $(b_+, b_-) \in (B_+\ttimes B_-)\cap \p D(\downset \conWV)$. By Lemma~\ref{l:d-equals-db}, there exist $A_\pm \subseteq B_\pm$ such that $b_\pm = \bigvee A_\pm$ and $A_+\ttimes A_- \subseteq \downset \conWV$. Fix an $a_-\in A_-$. Because $A_+\ttimes \{a_-\} \subseteq \downset \conWV$, for every $a^k_+\in A_+$, there exists an $\alpha^k\in \conWV$ such that $(a^k_+, a_-) \sqsubseteq \alpha^k$. Then,
    \[ (b_+,a_-) = (\bigvee A_+, a_-) \sqsubseteq (\bigvee_k \alpha^k_+, \bigwedge_k \alpha^k_-) \in \conWVbV = \conWbV \]
    and because $(b_+,a_-)\in B_+\ttimes B_-$ we can use ($\downset\conWV$-ind$_+$) and obtain that $(b_+, a_-)\in \downset \conWV$.

    Since $a_-\in A_-$ was chosen arbitrarily, $\{b_+\} \ttimes A_- \subseteq \downset \conWV$. Similarly to the above, for every $a^k_-\in A_-$ there is an $\alpha^k\in \conWV$ such that $(b_+, a^k_-) \sqsubseteq \alpha^k$, and $(b_+, b_-) \sqsubseteq (\bigwedge_k \alpha^k_+, \bigvee_k \alpha^k_-)\in \conVbW$. Finally, by ($\downset\conWV$-ind$_-$), $(b_+,b_-)\in \downset \conWV$.
\end{proof}

\begin{proof}[Proof of Proposition~\ref{p:simple-axioms}]
    We use Theorem~\ref{t:con-tot}. Clearly, (Indep$_\pm$) is a strengthening of ($\downset\conWV$-ind$_+$). To prove ($\lambda^4_\pm$-\con-\tot), let $\alpha \in \conWbV$ and $\beta\in \totW$ be such that $\beta_+ \leq \alpha_+$. Moreover, fix a $b_-\in B_-$ such that $b_- \leq \alpha_-$. Then, $(\alpha_+, b_-)\in \downset \conWbV$. By, (Indep$_+$), $(\alpha_+, b_-)\in \downset \conV$ and so there must be some $\gamma \in \conV$ such that $(\alpha_+, b_-)\sqsubseteq \gamma$. Because $\beta_+\leq \alpha_+ \leq \gamma_+$, by ($\mu_+$-\con-\tot), $b_- \leq \gamma_- \leq \beta_-$. Finally, because $b_-\in \downset \alpha_- \cap B_-$ was chosen arbitrarily, then also $\alpha_- = \bigvee (\downset \alpha_- \cap B_-) \leq \beta_-$.
\end{proof}

\section{Missing proofs in Section~\ref{s:coproducts}}

First, we prove general lemmas about a coproduct of frames $\bigoplus_i L^i$.

\begin{lemma}\label{p:copr-basics}
    Let $a,b\in L^j,\, \{a^k\}_k \subseteq L^j$ and $u, v\in B$. Then,
    \begin{enumerate}
        \item If $u\notin \mathbf n$, $\sem[u]\leq \sem[v]$ iff $u\leq v$.
        \item $\sem$ is injective on $B\setminus \mathbf n$.
        \item $(a \oplus_j u) \wedge (b \oplus_j u) = (a \wedge b) \oplus_j u$
        \item $\bigvee_k (a^k \oplus_j u) = (\bigvee_k a^k) \oplus_j u$
    \end{enumerate}
\end{lemma}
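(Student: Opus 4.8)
The plan is to derive all four parts from the explicit description of the inclusion $\sem\colon B \to \bigoplus_i L^i$ as $u \mapsto \downset u \cup \mathbf n$, together with the two structural properties of $\sem$ furnished by Lemma~\ref{l:frm-univ}: that it preserves finite meets and that it sends each cover in $\p C$ to a join. The order on $\bigoplus_i L^i$ is inclusion of $\p C$-ideals, so the computations reduce to manipulations of downsets and the bottom ideal $\mathbf n$.

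For part 1, the backward implication is immediate: $u \leq v$ gives $\downset u \subseteq \downset v$, hence $\sem[u] = \downset u \cup \mathbf n \subseteq \downset v \cup \mathbf n = \sem[v]$. For the forward implication I would use that $u$ itself lies in $\sem[u]$ (since $u\in\downset u$); if $\sem[u] \subseteq \sem[v] = \downset v \cup \mathbf n$, then $u \in \downset v \cup \mathbf n$, and the hypothesis $u \notin \mathbf n$ forces $u \in \downset v$, i.e.\ $u \leq v$. This is the only place where the side condition $u\notin \mathbf n$ is needed, and it is exactly what prevents $u$ from being absorbed into the bottom ideal. Part 2 is then a one-line consequence: if $u, v \notin \mathbf n$ and $\sem[u] = \sem[v]$, applying part 1 in both directions yields $u \leq v$ and $v \leq u$, so $u = v$ by antisymmetry.

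For part 3, I would rewrite both sides via $a \oplus_j u = \sem[a *_j u]$ and use that $\sem$ is a meet-semilattice homomorphism, so $(a\oplus_j u)\wedge(b\oplus_j u) = \sem[(a *_j u)\wedge(b *_j u)]$. Since meets in $\prod'_i L^i$ are computed coordinatewise, $(a *_j u)\wedge(b *_j u)$ agrees with $u$ off coordinate $j$ and equals $a\wedge b$ at $j$, i.e.\ it is $(a\wedge b) *_j u$; applying $\sem$ gives $(a\wedge b)\oplus_j u$. For part 4, I would observe that $\{ a^k *_j u : k\in K\}\dashv (\bigvee_k a^k) *_j u$ is precisely one of the covers defining $\p C$, so Lemma~\ref{l:frm-univ} turns it into the join identity $\bigvee_k \sem[a^k *_j u] = \sem[(\bigvee_k a^k) *_j u]$, which is exactly the claim.

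Because every step is either a direct set-theoretic computation or an invocation of Lemma~\ref{l:frm-univ}, I expect no serious obstacle; the only point requiring care is the forward direction of part 1, where one must remember that $\sem[u]$ is not merely $\downset u$ but $\downset u \cup \mathbf n$, so the argument genuinely relies on $u \notin \mathbf n$ to conclude $u \in \downset v$.
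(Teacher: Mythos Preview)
Your argument is correct throughout. Parts 2 and 3 match the paper's proof verbatim. For parts 1 and 4 you take a more self-contained route than the paper does: the paper simply cites Proposition~IV.5.2.4 of \cite{picadopultr2011frames} for part~1 and reduces part~4 to the binary-coproduct case via Proposition~IV.5.2.3 of the same reference, whereas you argue both directly from the explicit description $\sem[u] = \downset u \cup \mathbf n$ and from the cover-to-join property of $\sem$ in Lemma~\ref{l:frm-univ}. Your argument for part~4 is in fact the slicker one: the identity $\bigvee_k (a^k\oplus_j u) = (\bigvee_k a^k)\oplus_j u$ is literally the image under $\sem$ of a generating cover in~$\p C$, so no detour through binary coproducts is needed. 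The trade-off is that the paper's citations import the (standard) verification that $\downset u \cup \mathbf n$ really is the smallest $\p C$-ideal containing $u$, which your argument for part~1 silently relies on; but this is stated in Section~\ref{s:coprod-frm}, so you are entitled to use it.
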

\begin{proof}
    (1) is exactly Proposition IV.5.2.4 in~\cite{picadopultr2011frames}. (2) follows from (1). For (3), recall that $\sem$ is a meet-semilattice homomorphism, so $(a \oplus_j u) \wedge (b \oplus_j u) = \sem[a *_j u] \wedge \sem[b *_j u] = \sem[(a *_j u) \wedge (b *_j u)] = \sem[(a\wedge b) *_j u] = (a\wedge b) \oplus_j u$. Finally, for (4), by Proposition IV.5.2.3 in~\cite{picadopultr2011frames}, we know that in a coproduct of two frames we have
    \[ \bigvee_k (a^k \oplus_1 (1,b)) = (\bigvee_k a^k) \oplus_1 (1,b) \]
    and we can view $\bigoplus_i L^i$ as a coproduct of two frames $L^j\oplus (\bigoplus_{i\in \p I\setminus \{j\}} L^i)$.
\end{proof}

\begin{lemma}[finite meets]\label{l:oplus-wedge}
    Let $\alpha^j = a^j \oplus_{i(j)} \ones$, for $j = 1, \dots, n$. Then,
    \[\bigwedge_{j=1}^n \alpha^j = \bigwedge_{i\in I} (b^i \oplus_i \ones) = \bigcap_{i\in I} (b^i \oplus_i \ones)\]
    where $I = \{ i(j) : j=1,\dots, n\}$, $b^i = \bigwedge \{ a^j ~|~ i(j) = i\})$.

    Moreover, $\bigwedge_{j=1}^n \alpha^j = \sem[u] = \downset u \cup \mathbf n$ where $u \in \prod'_i L^i$ such that $(u)_i = b^i$ for every $i\in I$ and $(u)_i = 1$ otherwise.
\end{lemma}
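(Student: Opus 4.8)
The plan is to reduce everything to a coordinatewise computation in the meet-semilattice coproduct $\prod'_i L^i$, using that the translation map $\sem\colon \prod'_i L^i \to \bigoplus_i L^i$ is a meet-semilattice homomorphism (Lemma~\ref{l:frm-univ}). The crucial observation is that \emph{all} meets occurring in the statement are \emph{finite}: the index set $I = \{i(j) : j = 1,\dots,n\}$ is finite because $n$ is. Hence $\sem$ interchanges with each of them, and the frame meets in $\bigoplus_i L^i$ become $\sem$-images of meets taken in $\prod'_i L^i$, where meets are computed coordinatewise.

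I would first dispatch the \emph{Moreover} clause, since the first equality falls out of it. Writing $\alpha^j = \sem[a^j *_{i(j)} \ones]$ and using that $\sem$ preserves finite meets,
\[ \bigwedge_{j=1}^n \alpha^j \;=\; \sem\Bigl[\,\bigwedge_{j=1}^n (a^j *_{i(j)} \ones)\,\Bigr]. \]
The $i$-th coordinate of $\bigwedge_{j=1}^n (a^j *_{i(j)} \ones)$ is $\bigwedge\{a^j : i(j) = i\}$, because every factor with $i(j)\neq i$ contributes the value $1$ in coordinate $i$ and thus drops out of the meet. For $i\in I$ this is $b^i$, while for $i\notin I$ it is an empty meet, i.e.\ $1$. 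Consequently $\bigwedge_{j=1}^n (a^j *_{i(j)} \ones) = u$, and therefore $\bigwedge_{j=1}^n \alpha^j = \sem[u] = \downset u \cup \mathbf n$.

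The same coordinatewise calculation, now applied to the finite family $\{b^i *_i \ones\}_{i\in I}$, gives $\bigwedge_{i\in I}(b^i *_i \ones) = u$ as well: coordinate $k$ receives $b^k$ when $k\in I$ and $1$ otherwise. Pushing $\sem$ through this finite meet yields $\bigwedge_{i\in I}(b^i \oplus_i \ones) = \sem[u]$, which coincides with the previous expression and so establishes the first equality. For the last equality, $\bigwedge_{i\in I}(b^i \oplus_i \ones) = \bigcap_{i\in I}(b^i \oplus_i \ones)$, I would simply invoke the general fact recalled in Section~\ref{s:frame-pres} that meets of \p C-ideals are computed as their intersections, each $b^i \oplus_i \ones$ being a \p C-ideal.

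There is no genuine obstacle; the argument is essentially bookkeeping, and the one conceptual point is that \emph{finiteness} of $I$ is exactly what lets the merely meet-semilattice homomorphism $\sem$ (as opposed to a full frame homomorphism) commute with the meets. The only places demanding a little care are the empty-meet convention in the coordinates outside $I$, and the degenerate situation in which some $b^i = 0$, which pushes $u$ into $\mathbf n$; this causes no difficulty, since then $\sem[u] = \mathbf n$ and all three expressions collapse to the bottom element.
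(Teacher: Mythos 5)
Your proof is correct and follows essentially the paper's own route: both arguments hinge on $\sem$ preserving finite meets (the paper packages this as Lemma~\ref{p:copr-basics}.3, which you inline as a coordinatewise computation in $\prod'_i L^i$) together with the fact that meets of $\p C$-ideals are intersections. The only difference is organizational: you establish the ``Moreover'' clause first and read the displayed equalities off it, whereas the paper proves the equalities first and obtains the ``Moreover'' part from the intersection representation.
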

\begin{proof}
    By Lemma~\ref{p:copr-basics}, $\bigwedge_{j=1}^n \alpha^j = \bigwedge_{i\in I} (b^i \oplus_i \ones)$ and, because meets of \p C-ideals are computed as their intersection (see Section~\ref{s:frame-pres}), $\bigwedge_{j=1}^n \alpha^j = \bigcap_{i\in I} (b^i \oplus_i \ones)$. The `Moreover' part follows from this representation.
\end{proof}

\begin{lemma}[finite joins]\label{l:oplus-vee}
    Let $\alpha^j = a^j \oplus_{i(j)} \ones$, for $j = 1, \dots, n$. Then,
    \[\bigvee_{j=1}^n \alpha^j = \bigvee_{i\in I} (b^i \oplus_i \ones) = \bigcup_{i\in I} (b^i \oplus_i \ones)\]
    where $I = \{ i(j) : j=1,\dots, n\}$ and $b^i = \bigvee \{ a^j ~|~ i(j) = i\}$.
\end{lemma}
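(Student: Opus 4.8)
The statement contains two equalities, and I would treat them separately. The first, $\bigvee_{j=1}^n \alpha^j = \bigvee_{i\in I}(b^i \oplus_i \ones)$, is purely a matter of regrouping the join by index. Writing $J_i = \{\, j : i(j) = i \,\}$ for each $i\in I$, the set $\{1,\dots,n\}$ is the disjoint union of the $J_i$, and Lemma~\ref{p:copr-basics}.4 (applied with $u = \ones$) collapses the strips that share an index: $\bigvee_{j\in J_i}(a^j \oplus_i \ones) = (\bigvee_{j\in J_i} a^j)\oplus_i \ones = b^i \oplus_i \ones$. Associativity and commutativity of the frame join then yield the first equality.

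The real content is the second equality $\bigvee_{i\in I}(b^i \oplus_i \ones) = \bigcup_{i\in I}(b^i \oplus_i \ones)$. Since the join of \p C-ideals is the smallest \p C-ideal containing their union (Section~\ref{s:frame-pres}), it suffices to show that $S \DEFEQ \bigcup_{i\in I}(b^i \oplus_i \ones)$ is \emph{already} a \p C-ideal; then $S$ is the smallest \p C-ideal containing itself, so the join equals it. Unpacking the definitions gives $b^i \oplus_i \ones = \downset(b^i *_i \ones)\cup \mathbf n = \{\, u\in B : u_i \leq b^i \,\}\cup \mathbf n$, hence $S = \mathbf n \cup \bigcup_{i\in I}\{\, u\in B : u_i \leq b^i \,\}$. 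That $S$ is a downset is immediate, being a union of downsets. The work is in verifying closure under the coverings $\{\, a^k *_j w : k\in K \,\}\dashv (\bigvee_{k} a^k)*_j w$ of Section~\ref{s:coprod-frm}.

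For cover-closure I would assume every $a^k *_j w$ lies in $S$ and prove $v \DEFEQ (\bigvee_k a^k)*_j w \in S$ by cases. If $v\in \mathbf n$ we are done, so assume $v\notin \mathbf n$; then $w_\ell \neq 0$ for all $\ell\neq j$ and $\bigvee_k a^k \neq 0$. If $w_i \leq b^i$ for some $i\in I$ with $i\neq j$, then $v_i = w_i \leq b^i$ gives $v\in b^i\oplus_i \ones\subseteq S$. Otherwise no such $i$ exists, and I would show that then $j\in I$ and $a^k \leq b^j$ for every $k$: since $\bigvee_k a^k\neq 0$, some $a^{k_0}\neq 0$, whence $a^{k_0}*_j w\notin \mathbf n$ (all its coordinates are nonzero), so it lies in some $b^i\oplus_i\ones$ with $(a^{k_0}*_j w)_i\leq b^i$; the current case rules out $i\neq j$, forcing $i=j\in I$. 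Running the same argument for each $k$ (the case $a^k=0$ being trivial) yields $a^k\leq b^j$ throughout, so $v_j = \bigvee_k a^k \leq b^j$ and $v\in b^j\oplus_j\ones\subseteq S$.

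The delicate point, and the step I expect to require the most care, is exactly this last case: a covering singles out a distinguished coordinate $j$ while $S$ is defined through all coordinates, and one must first exclude membership in $\mathbf n$ before invoking the defining inequality $u_i\leq b^i$ of a strip, since $\mathbf n$ collapses precisely the elements with a zero coordinate. Once $S$ is confirmed to be a \p C-ideal, both displayed equalities follow and the lemma is established.
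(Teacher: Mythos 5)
Your proof is correct and takes essentially the same route as the paper's: the heart of both arguments is showing that $\bigcup_{i\in I}(b^i \oplus_i \ones)$ is already a \p C-ideal, using the same dichotomy on a covering $\{a^k *_j w\}_k \dashv (\bigvee_k a^k)*_j w$ --- membership witnessed by a shared coordinate $i\neq j$ versus by the distinguished coordinate $j$. Your explicit treatment of $\mathbf n$ and of zero coordinates simply replaces the paper's ``without loss of generality'' reduction, and your splitting of the statement into two equalities (regrouping via Lemma~\ref{p:copr-basics}.4, then join equals union) reorganizes, but does not change, the same content.
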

\begin{proof}
    Let $\beta^i \DEFEQ b^i \oplus_i \ones$, for every $i\in I$. First, we will show that $\bigcup_{i\in I} \beta^i$ is a $\p C$-ideal. Let $X = \{ x^k *_l u \}_{k\in K} \subseteq \bigcup_{i\in I} \beta^i$. Without loss of generality, assume that $X\cap \mathbf n = \emptyset$, i.e.\ $u_i \not= 0$, for all $i\not= l$, and that $x^k \not= 0$, for all $k\in K$. If $X\subseteq \beta^i$, for some $i$, then also $\bigvee_{k\in K} x^k *_l u \in \beta^i$ because $\beta^i$ is a $\p C$-ideal. Otherwise, there must exist an $m\in K$ which is different from $l$ such that $x^m *_l u \in \beta^m$. This means that $(x^m *_l u)_m = u_m \leq b^m$. From this we have that, for all $k\in K$, $(x^k *_l u)_m = u_m \leq b^m$ and, therefore, $x^k *_l u \in \beta^k$. Again, because $\beta^k$ is a $\p C$-ideal, $(\bigvee_{k\in K} x^k) *_l u \in \beta^k$.

    Next, $\alpha^j \subseteq \beta^{i(j)}$, for all $j=1,\dots, n$, and so $\bigcup_j \alpha^j \subseteq \bigcup_{i\in I} \beta^i$. Because $\bigvee_{j=1}^n \alpha^j$ is the smallest \p C-ideal containing $\bigcup_j \alpha^j$ and $\bigcup_{i\in I} \beta^i$ is also a \p C-ideal, we get that $\bigvee_{j=1}^n \alpha^j \subseteq \bigcup_{i\in I} \beta^j$. Finally, every $\beta^i \subseteq \bigvee \{\alpha^j ~|~ i(j) = i\} \subseteq \bigvee_{j=1}^n \alpha^j$ and so $\bigcup_{i\in I} \beta^j \subseteq \bigvee_{j=1}^n \alpha^j$.
\end{proof}

\bigskip

Now, let us look at the missing proofs from Section~\ref{s:coproducts}:

\begin{proof}[Proof of Lemma~\ref{p:contot-canon}]
    We prove that, for every $\alpha\in \conV$, $\alpha = \bigvee_{i\in I} \alpha^i$ for some $I\finsubseteq\p I$ and $\alpha^i = (b_+^i \oplus_i \ones,\, b_-^i \oplus_i \ones)\in  \conO$ (or $\totO$) and from this the lemma follows.

    Let $\bigvee_{j=1}^{n} \alpha^j\in \conV$ where $\alpha^j = (a_+^j \oplus_{i(j)} \ones,\, a_-^j \oplus_{i(j)} \ones)\in  \conO$, for all $j=1, \dots, n$. From Lemma~\ref{l:oplus-wedge} and Lemma~\ref{l:oplus-vee}, we have that
    \[ \bigvee_{j=1}^n \alpha^j = (\bigvee_{i\in I} (b_+^i \oplus_i \ones),\, \bigwedge_{i\in I} (b_-^i \oplus_i \ones)) \]
    where $I = \{ i(j) : j=1,\dots, n\}$ and $b^i_+ = \bigvee \{ a^j ~|~ i(j) = i\}$ and $b^i_- = \bigwedge \{ a^j ~|~ i(j) = i\}$. For every $i\in I$, because (\con-$\vee$) holds for $\p L^i$, $(b^i_+, b^i_-) \in \con^i$ and, therefore, also $(b_+^i \oplus_i \ones,\, b_-^i \oplus_i \ones) \in \conO$.
\end{proof}

\begin{proof}[Proof of Lemma~\ref{p:strips}]
    Recall that, for every $i\in \p I$, $\iota^i_\pm\colon L^i_\pm \to \bigoplus_i L^i_\pm$, $a \mapsto a \oplus_i \ones$, is a frame homomorphism obtained as a composition $\iota^i_\pm = \sem_\pm \circ \kappa^i_\pm$ (see Section~\ref{s:coprod-frm}). Here, $\kappa^i_\pm$ is always one-one and, because all $L^i_\pm$'s are nontrivial, $\kappa^i_\pm[L^i_\pm\setminus \{0\}] \cap \npm = \emptyset$ and so, by Lemma~\ref{p:copr-basics}, $\sem_\pm$ is one-one when restricted to image of $\kappa^i_\pm$. Therefore, $\iota^i$ is also one-one as it is a composition of those two maps. Moreover, $\iota^i$ is also onto $S^i_\pm$ and, because of Lemma~\ref{p:copr-basics},
    \[(a \oplus_i \ones) \wedge (b \oplus_i \ones) = (a\wedge b)\oplus_i \ones
        \qtq{and}
    \bigvee_k (a^k \oplus_i \ones) = (\bigvee_k a^k) \oplus_i \ones\]
    and so the associated inverse image map $(\iota^i_\pm)^{-1}\colon S^i_\pm \to L^i_\pm$ is also a frame homomorphism. Finally, from the definitions
    \begin{align*}
        (a,b)\in \con^i &\qtq{iff} (a \oplus_i \ones,\, b \oplus_i \ones) \in \conO \\
        (a,b)\in \tot^i &\qtq{iff} (a \oplus_i \ones,\, b \oplus_i \ones) \in \totO
    \end{align*}
    which makes $\iota^i$ restricted to the image, i.e.\ to $S^i_+\ttimes S^i_-$, an isomorphism of both structures.
\end{proof}

\begin{proof}[Proof of Theorem~\ref{t:coproduct-univ}]
    The last thing we need to check is universality of the construction. Let $\p M$ be a d-frame such that for every $j\in \p I$ there is a d-frame homomorphism $\lambda^j\colon \p L^j \to \p M$. We have the following situation:
    \begin{diagram}
        \p L^j
            \ar{r}{\kappa^j}
            \ar{rd}{\lambda^j} &
        (\prod'_i L^i_+, \prod'_i L^i_-, \conO, \totO)
            \ar{r}{\sem}
            \ar[dotted]{d}{\lambda} &
        \bigoplus_i \p L^i
            \ar[dotted]{ld}{\overline\lambda} \\

        & \p M
    \end{diagram}
    We obtain $\lambda = (\lambda_+, \lambda_-)$ from the universal property of the coproduct $\prod'_i L^i_\pm$: because all $\lambda^j_\pm$'s are meet-semilattice homomorphisms,  $\lambda_\pm$ is the unique meet-semilattice homomorphism such that $\lambda^j_\pm = \lambda_\pm \circ \kappa^j_\pm$. Since all $\lambda^j$ are d-frame homomorphisms, also $\lambda$ preserves $\conO$ and $\totO$ (follows immediately from the definition). It also preserves the cover relations in the presentation of d-frame coproducts (similarly to Proposition 4.3.2 in~\cite{picadopultr2011frames}):
    \begin{align*}
        \bigvee_k \lambda_\pm(a^k *_j u)
        &= \bigvee_k \lambda_\pm((a^k *_j \ones) \wedge (1 *_j u))
        \overset{(1)}= \bigvee_k (\lambda_\pm(a^k *_j \ones) \wedge \lambda_\pm(1 *_j u)) \\
        &= (\bigvee_k \lambda_\pm(a^k *_j \ones)) \wedge \lambda_\pm(1 *_j u) \\
        &\overset{(2)}= \lambda_\pm((\bigvee_k a^k)*_j \ones) \wedge \lambda_\pm(1 *_j u)
        \overset{(1)}= \lambda_\pm((\bigvee_k a^k)*_j u)
    \end{align*}
    Where (1)'s hold because $\lambda_\pm$ is a meet-semilattice homomorphism and (2) holds because $\lambda^j_\pm$ is a frame homomorphism and so $\bigvee_k \lambda_\pm(a^k *_j \ones) = \bigvee_k \lambda_\pm(\kappa^j_\pm(a^k)) = \bigvee_k \lambda^j_\pm(a^k) =  \lambda^j_\pm(\bigvee_k a^k) = \lambda_\pm(\kappa^j_\pm(\bigvee_k a^k)) = \lambda^j_\pm((\bigvee_k a^k)*_j \ones)$. Therefore, $\lambda$ preserves $\bigoplus_i \p L^i$ presentation and, by Lemma~\ref{l:dfrm-univ}, it can be uniquely extended to a d-frame homomorphisms $\overline\lambda$.
\end{proof}

\end{document}